\documentclass[epsf,11pt,times,psfrag]{article}
\usepackage[T1]{fontenc}
\usepackage{multirow,amssymb,amsmath,graphicx,epsfig,graphics,geometry,setspace}
\usepackage{wasysym,hyperref,textcomp}
\geometry{verbose,a4paper}
\usepackage{graphicx}
\usepackage{ulem}
\usepackage[usenames]{xcolor}
\usepackage[noend]{algpseudocode}
\usepackage{algorithm}

\newcommand{\ignore}[1]{{}}
\newcommand{\twonormsq}[1]{{||{#1}||}^2_2}
\newcommand{\twonorm}[1]{{||{#1}||}_2}
\newcommand{\E}{\mathbb{E}}

\newcommand{\plusminus}{\pm}

\newenvironment{proof}{\par\noindent{\bf Proof:}}{\mbox{}\hfill$\Box$\\}

\setlength{\topmargin}{-0.5in}
\setlength{\textwidth}{6.5in}
\setlength{\oddsidemargin}{0.0in}
\setlength{\textheight}{9.75in}

\newtheorem{theorem}{Theorem}[section]

\newtheorem{lemma}{Lemma}[section]

\newtheorem{observ}{Observation}[section]
\newtheorem{cor}{Corollary}[lemma]
\newtheorem{claim}{Claim}

\begin{document}
\title{Randomized Rounding Revisited with Applications
}
\author{ 
Dhiraj Madan\thanks{Email:\texttt{dhirajmadan1@gmail.com}} \ and \ 
Sandeep Sen\thanks{Email:\texttt{ssen@cse.iitd.ernet.in}} \\
   Department of CSE,\\
   I.I.T. Delhi, India
}  

\maketitle
\begin{abstract}
We develop new techniques for rounding packing
integer programs using iterative randomized rounding.
It is based on a novel application of multidimensional Brownian motion
in $\mathbb{R}^n$.
Let $\overset{\sim}{x} \in {[0,1]}^n$ be a fractional feasible solution of a
packing constraint $A x \leq 1,\ \ $ $A \in {\{0,1 \}}^{m\times n}$
that maximizes a linear objective function.
The independent randomized rounding method of Raghavan-Thompson rounds
each variable $x_i$ to 1 with probability $\overset{\sim}{x_i}$ and 0 otherwise.
The expected value of the rounded objective function matches the fractional
optimum and no constraint is violated by more than $O(\frac{\log m}
{\log\log m})$.
In contrast, our algorithm iteratively transforms
$\overset{\sim}{x}$ to $\hat{x} \in {\{ 0,1\}}^{n}$ using a random walk,
such that the expected values of $\hat{x}_i$'s are consistent with the
Raghavan-Thompson rounding. In addition, it gives us intermediate
values $x'$ which can then be used to bias the rounding towards a superior
solution.  

The reduced dependencies between the constraints of the sparser
system can be exploited using {\it Lovasz Local Lemma}.
Using the Moser-Tardos' constructive version,
$x'$ converges to $\hat{x}$ in polynomial time to a distribution over the
unit hypercube ${\cal H}_n = {\{0,1 \}}^n$ such that the expected value of any
linear objective function over ${\cal H}_n$ equals the value at $\overset{\sim}{x}$.

For $m$ randomly chosen packing constraints in $n$
variables, with $k$ variables in each inequality,
the constraints are satisfied within
$O(\frac{\log (mkp\log m/n) }{\log\log (mkp\log m/n)})$ with high probability where
$p$ is the ratio between the maximum and minimum coefficients of the linear objective
function.  For example,
when $m,k  = \sqrt{n}$ and $p = polylog (n)$, this yields $O(\log\log n/
\log\log\log n)$ error for polylogarithmic weighted objective functions that
significantly improves the $O(\frac{\log m}{\log\log m})$ error incurred by
the classical
randomized rounding method of Raghavan and Thompson \cite{RT:87}.

Further, we explore trade-offs between approximation factors and error,
and present applications to well-known problems like circuit-switching,
maximum independent set of rectangles and hypergraph $b$-matching.
Our methods apply to the weighted instances of the problems and are likely
to lead to better insights for even dependent rounding.

\end{abstract}
\newpage
\section{Introduction}
Many combinatorial optimization problems can be modeled
using a weighted packing integer program.
$ \mbox{ Maximize } \sum_{i=1}^{i=n} c_i \cdot x_i$
subject to
$ \sum_{ i \in S_j } x_i \leq 1 \ \ 1 \leq j \leq m \ \ \ 
 x_i \in \{0, 1 \}$.\\ 
Wlog, we can assume that $\max_i c_i = 1$.
Although the above formulation appears somewhat restrictive
having a fixed right hand side of each inequality, the
methods that we develop extend to more general parameters.
Often the constraints are expressed as 
$ C_j: V_j \cdot x \leq 1$ 
where $V_j$ is a 0-1 incidence vector corresponding to set $S_j \subset
\{ 1, 2\ldots n \}$. We will also use $x_i$ to denote the $i$-th coordinate
of a vector $x$.
Since this version is also NP-hard as it captures many intractable
independent set problems,
a common strategy is to solve the Linear Program (LP)
 corresponding to the relaxation $0 \leq x_i \leq 1$. Then
use the LP optimum $OPT$ to obtain a good approximation of the optimal
integral solution. 
Suppose the optimum is achieved by the
vector $x' = [ x'_1 , x'_2 \ldots x'_n ]$ 
In the conventional {\it randomized rounding} \cite{RT:87}, we 
round each $x_i$ independently in the following manner.\\ 
$\hat{x}_i =
\{ 1 \text{ with probability $x'_i$ and } 
0 \text{ otherwise} \}$.
\\
Since $\E[ \hat{x}_i ] = x'_i$, it follows that $\E [ \sum_i c_i \hat{x}_i ] 
=  \sum_i c_i \cdot x'_i $ and using Chernoff bounds, we can show that for
all $j$, w.h.p. $ \sum_{ i \in S_j } \hat{x}_i 
\leq \left(\frac{\log m}{\log\log m}
\right) $.

The primary motivation for improved randomized rounding
is improved approximation of
integer linear packing problems. Historically, the original randomized
rounding technique was proposed for obtaining better approximation of 
multicommodity
flows \cite{RT:87}. 
Srinivasan \cite{AS:95,AS:99} presents an extensive survey of many 
sophisticated variations of
randomized rounding techniques and applications to approximation algorithms.
It was established much later \cite{CGKT:07}
that the Raghavan-Thompson bound cannot be improved as a consequence of the
following result.
\begin{lemma}[\cite{CGKT:07}]
There exists a constant $\delta > 0$ such that the integrality gap of the 
multicommodity flow relaxation problem is $\Omega ( 1/c' \cdot 
n^{\frac{1}{3c' + 13}} )$ for any congestion $c'$, $1 \leq c' \leq 
(\delta \log n)/
\log\log n $ where $n$ is the number of vertices in the graph and the 
integrality gap for $c'$ is with respect to congestion 1. 
\end{lemma}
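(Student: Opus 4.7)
The plan is to construct a family of hard multicommodity flow instances directly rather than to analyze a specific LP relaxation, since the statement is an existential lower bound on the integrality gap. Concretely, I would aim to produce, for every target congestion $c'$ in the stated range, a graph $G$ on $n$ vertices with a set of source-sink pairs such that the LP relaxation simultaneously routes a unit of flow between every pair with unit edge congestion, while any integral routing respecting congestion $c'$ connects only an $O(c' \cdot n^{-1/(3c'+13)})$ fraction of the pairs. The ratio of these two quantities then yields the claimed $\Omega\bigl(n^{1/(3c'+13)}/c'\bigr)$ integrality gap.

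The first step is to set up a base instance with a polynomial congestion-1 integrality gap using an expander-like construction. In an expander, the LP can fractionally route all source-sink pairs by spreading flow over many short paths, whereas any integral system of vertex-disjoint paths is limited by the short-path structure (a fact typically proved via a cut-based or flow-decomposition argument). This base instance gives the right qualitative behavior for $c'=1$, but the exponent degrades too quickly if one simply allows higher congestion, so a direct expander argument is insufficient for arbitrary $c'$.

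The main step, and the place where the particular exponent $1/(3c'+13)$ arises, is a recursive amplification. I would embed a base instance on $n_0$ vertices in place of each edge (or vertex-gadget) of a smaller ``scaffold'' graph, so that every source-sink path in the composite instance must traverse several nested copies of the base instance. If each copy routes only a $1/g$ fraction of its internal demands under congestion $c'$, then the composite with $d$ levels of nesting routes only a $1/g^d$ fraction, while $n$ grows like $n_0^{\Theta(d)}$. Choosing the scaffold and nesting depth carefully converts this into the stated polynomial gap; the constants $3$ and $13$ are the price of carrying the scaffold's expansion and connectivity requirements through each recursion level.

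The hard part will be the inductive lower bound at each level: arguing that any integral routing in the composite graph with congestion $c'$ induces integral low-congestion routings in a large number of internal copies, so that the base-instance lower bound can be applied recursively. This requires trading the congestion budget $c'$ against the number of internal copies that can be simultaneously ``active'', typically via a careful flow-decomposition or edge-charging argument. It is this trade-off that introduces the linear dependence on $c'$ in the exponent, and verifying that the restriction $c' \le \delta \log n / \log\log n$ suffices for the recursion to terminate with $n$ vertices and produce a non-trivial gap is the most delicate parameter-tracking in the proof.
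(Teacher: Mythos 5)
This lemma is not proved in the paper at all: it is stated as a citation to \cite{CGKT:07} and used as a black box to argue that the Raghavan--Thompson congestion bound cannot be improved in general. There is therefore no ``paper's own proof'' to compare against; what you are attempting to reconstruct is the argument of Chuzhoy, Guruswami, Khanna, and Talwar, which lives entirely in the cited reference.

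On the merits of your sketch: the high-level shape (build a base instance with a congestion-1 gap, then recursively amplify it so that the exponent in the gap trades off against the congestion budget) is the right genus of argument for results of this type, and the observation that verifying termination of the recursion under the constraint $c' \leq \delta \log n / \log\log n$ is the delicate parameter-tracking step is also correct in spirit. However, there is a concrete error in your base case. Expanders are \emph{good} instances for integral routing, not bad ones: by Bohman--Frieze (cited in this very paper as \cite{BF:01}) and related work, one \emph{can} route nearly optimally many edge-disjoint paths in expander graphs, so their congestion-1 integrality gap is small. The short-path structure you invoke actually helps, rather than hinders, the integral router. The hard base instances used in this line of work are instead highly structured graphs with bottlenecks (grid/brick-wall--style gadgets in the undirected case, carefully layered constructions in the directed case) where the LP can spread flow over many parallel routes that any integral solution with bounded congestion cannot simultaneously realize. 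Without a correct base instance, the recursion you propose amplifies the wrong quantity. You also leave the derivation of the exponent $1/(3c'+13)$ entirely to ``the price of the scaffold,'' which is precisely where the content of the theorem sits; a plan that does not at least set up the recurrence from which $3c'+13$ falls out cannot be checked.
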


Therefore no rounding algorithm can simultaneously achieve error
$c' = o( \log n/\log\log n)$, 
and an objective function value of $\Omega ( OPT )$ for all
polynomial size input ($m \leq n^{O(1)}$). 
Note that this does not preclude improvement of the multicommodity flow problem
approximation by alternate formulation - for this, there are alternate 
hardness bounds given by \cite{CGKT:07}. 

Subsequent to the work of Raghavan-Thompson, Srinivasan \cite{AS:95,AS:96}
Baveja and Srinivasan \cite{BS:00}, Kolliopoulous and Stein \cite{KS:98}
obtained results that focussed on circumventing the above bottleneck
by making use of some special properties of the constraint matrices like
column-restricted matrices. In 
particular, the focus shifted to designing approximation algorithms that
are feasible (unlike the basic independent rounding of RT) by bounding
dependencies between inequalities and using {\it Lovasz Local Lemma} (LLL)
or even more sophisticated corelation inequalites like FKG \cite{AS:96}.
Leighton, Rao and Srinivasan \cite{LRS:98} made very clever use of
LLL to achieve nearly optimal results in graphs with {\it short} flow paths.
The notion of short flow paths was developed further in the context of 
the {\it unsplittable} 
multicommodity flow problem with an explicit objective function in many
papers including \cite{KR:96,BS:00,KS:06,CCGK:07}. Intuitively, short
flow paths reduce dependencies between conflicting routing
paths and makes the algorithms more efficient in terms of assigning paths
without exceeding the maximum allowed congestion (for edge-disjoint paths
it is 1). 

In this paper, we attempt to provide a uniform framework behind many prior
uses of independent rounding by recasting the process as
a Brownian walk in high dimension and analyzing its convergence. For readers
familiar with this technique, it may appear to be an overkill since the
final outcome of Brownian motion and independent rounding are identical 
(we shall formalize this in the next section). However, we will demonstrate
that this framework offers a cleaner exposition and simpler explanations 
of many of the previous clever, yet {\it ad-hoc} techniques. 
We have presented
a detailed characterization of the time-dependent behavior of the random variables 
associated with each inequality executing Brownian motion which was not studied before
to the best of our knowledge.   
It could potentially offer new tools for analysis of more complicated rounding
methods where the random variables may not be independent. 

We demonstrate two distinct applications of our new framework
by addressing the rounding problem for the {\it average} case of
an input family that we will define more precisely. 
\\
(i) A simple approximation algorithm for {\it weighted} objective
function. 
\\
(ii) An improvement of the Raghavan-Thompson rounding error for
a restricted class of weighted objective function.

As opposed to
the one-shot rounding of \cite{RT:87} , also referred as {\it independent
rounding}, our rounding is iterative and based on successive refinements
of the LP solution that has a lower variance per step, to yield a sparser
set of constraints. This reduces dependencies between inequalities
and makes the use of techniques like Lovasz Local Lemma (LLL) more effective. 
\subsection{Main results and some applications}
\begin{theorem}
Given an $m \times n$ 0-1 matrix $A$, such that $A \cdot x' \leq b$ for 
$x'_i \in [0,1]$,
and an objective function $\sum_i c_i x_i$, 
in randomized polynomial time $x'$ can be transformed into $y'$ which 
are $[0,1]$ valued random variables such that for all $(\log n - \log\log m 
+ \log b ) \geq p \geq 0$\\
(i) All constraints have $ \leq \frac{nB}{2^p}$ non-zero variables from 
$y'$ and
\\
(ii) For all $i$, $\sum_i A_i \cdot y' \leq O(\sqrt{\frac{2^p B \log m}{n}})$ where 
$A_i$ is the $i$-th row of $A$.
\\
(iii) $\E [ c_i y'_i ] = \sum_i c_i x_i$
\label{sparsify}
\end{theorem}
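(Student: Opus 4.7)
The plan is to realize the iterative rounding as a multidimensional Brownian motion $X(s) \in [0,1]^n$ started at $X(0) = x'$, with independent coordinates, each absorbed on first hitting $0$ or $1$. I would set $y' := X(t)$ for a stopping time $t = t(p)$ to be chosen. Part (iii) is immediate from the martingale property of Brownian motion: $\E[y'_i] = x'_i$, so $\E[\sum_i c_i y'_i] = \sum_i c_i x'_i$.

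For (i), I would use the standard absorbing Brownian motion estimate: a coordinate starting at $x'_i$ is still active at time $t$ with probability roughly $\min(x'_i, 1 - x'_i)/\sqrt{t}$. Summing over $i \in S_j$ and using $\sum_{i \in S_j} x'_i \leq b$, the expected number of active coordinates in row $j$ is $O(b/\sqrt{t})$. Calibrating $t$ so this count equals $nB/2^p$, a Chernoff tail bound on the (conditionally independent) absorption events, combined with a union bound over the $m$ rows, promotes this expectation into the claimed high-probability sparsity guarantee.

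For (ii), consider the partial sum $M_j(s) := \sum_{i \in S_j} X_i(s)$, which is a martingale whose quadratic variation is at most (number of still-alive coordinates in $S_j$) $\cdot t$. Conditioning on the event from (i), Azuma--Hoeffding (or Freedman's inequality) gives $|M_j(t) - M_j(0)| = O(\sqrt{(nB/2^p) \cdot t \cdot \log m})$ uniformly over all $m$ rows, and substituting the chosen $t$ produces the target $O(\sqrt{2^p B \log m / n})$ bound.

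The chief obstacle is twofold: first, realizing the continuous Brownian motion algorithmically by a discrete random walk with sufficiently small step size that both the absorption probability estimates and the martingale concentration still apply, and that the process terminates in randomized polynomial time; second, making the two bounds (i) and (ii) hold simultaneously, which forces a delicate choice of $t$ trading off absorption rate against fluctuation growth. When the correlations between row-events are too strong for a plain union bound, I would invoke the Moser--Tardos constructive Lovász Local Lemma (anticipated in the introduction), leveraging the fact that the sparsification of $y'$ sharply reduces the dependency structure between the bad events.
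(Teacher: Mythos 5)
Your approach matches the paper's essentially ingredient-for-ingredient: a discrete $\gamma$-step random walk approximating Brownian motion with absorbing barriers at $\{0,1\}$, a gambler's-ruin survival estimate for part (i), martingale concentration for part (ii), and the optional stopping theorem (giving $\E[y'_i] = x'_i$) for part (iii). The survival estimate you cite is the continuous-time form of the paper's Lemma on absorption probabilities, which is proved there via the quadratic martingale $B^2(t)-t$ and Markov's inequality, and the Chernoff--plus--union-bound promotion of the expected alive-count to a high-probability bound is exactly what Lemma~\ref{absorp_rate} does.

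There is, however, a genuine gap in the step for (ii). You claim the quadratic variation of $M_j$ over $[0,t]$ is "at most (number of still-alive coordinates) $\cdot\, t$," and then substitute the count $nB/2^p$ from the event in (i) --- but that event controls the alive-count only at the single time $t$. The true quadratic variation is $\int_0^t |V_j^s|_2^2\,ds$, which depends on the alive-count at \emph{all} intermediate times, and since that count only decreases, your expression is a \emph{lower} bound on the QV, not an upper bound. The fact that $\sqrt{(nB/2^p)\cdot t\cdot\log m}$ nevertheless comes out to the correct order is an artifact of the $1/\sqrt{s}$ decay of the alive-count: $\int_0^t \mathrm{count}(s)\,ds \approx 2\,\mathrm{count}(t)\cdot t$, so the final dyadic block dominates. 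The paper makes this rigorous with an explicit phase decomposition (Lemma~\ref{err_bnd}): within phase $q$ it uses the count available from the end of phase $q{-}1$ (a legitimate uniform upper bound over that phase), gets a per-phase error $\delta_q \approx \sqrt{2^q B\log m/n}$, and sums the geometric series. To close the gap you would need to apply (i) at every phase boundary (with a union bound over phases and rows), not just at the terminal time. Also, the Moser--Tardos/LLL backup you mention at the end is not needed for this theorem --- a plain union bound over the $m$ rows suffices throughout the sparsification stage; LLL enters only later when the sparsified, now low-dependency system is rounded to $\{0,1\}$.
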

In other words we have a sparse system of equivalent constraints with the objective
functional value unchanged in the expection.
\\

{\bf Remark} For $p = \log n - \log\log m + \log B$, there are $\log m$ 
variables in each inequality summing upto $O(1)$. This specific result has
been observed before by using a clever application of independent rounding
in the following manner. Round each $x'_i < \frac{1}{\log m}$ to 
$\frac{1}{\log m}$ with probability $x'_i \log m$. Then $\E [x_i] =
x'_i$ and no inequality has more than $\log m$ variables with high probability.
\ignore{
Our transformation does not use any extra variables unlike the 
previous methods that makes multiple copies of each variable 
(see for example \cite{CC:09}). 
They transform the initial matrix $A$ to an $m\times n\log n$ matrix $A'$ and 
$x'$ is transformed into $y'$ that has dimension $n\log n$, using 
another application of independent rounding (some papers present it as 
a technique to do $\frac{1}{\log n}$ rounding) so
that the fractional solution remains feasible. Subsequently,
care has to be taken to ensure that the rounded solution chooses exactly
one copy of a variable which is an implicit case of dependent rounding. 
The {\it path decomposition} strategy for multicommodity flow 
in the original paper of Raghavan and Thompson \cite{RT:87} is also 
along similar lines.  While it works for specific instances, it will be
technically challenging to formalize it as a {\it blackbox} 
technique similar to the previous theorem.    
}

Using the above sparsification, we obtain a number of interesting 
results arguably in simpler ways than known previously.  
\ignore{
\begin{theorem}
Let $A \in {\{0,1\}}^{m \times n}$, with no more than $\rho$
1's per column.
Suppose $x' \in {[0,1]}^n$ maximizes $\sum_i c_i x_i $ and
satisfies $ A \cdot x \leq \cdot e^m $ with $OPT = \sum_i c_i x'_i$.
Then $x'$ can be rounded to $\hat{x} \in {\{0,1 }\}^n$ in polynomial time 
such that
the following holds with high probability, 
\begin{eqnarray}
 {|| A \hat{x}||}_\infty & \leq & 1
\text{ and } \\
\sum_i c_i \hat{x}_i & \geq & \Omega ( OPT/(\rho \log m) )
\end{eqnarray}
\label{mainthm2}
\end{theorem}
\vspace{-0.3in}
The above result follows by applying a simple greedy procedure to the 
sparsified matrix.

For a more general RHS, $b \geq 1$, we use LLL-based techniques to obtain the following
results.
}
\begin{theorem}
Let ${\cal A}^{m \times n}_k$ denote the family of $m \times n$ 
matrices where each row (independently) has $k$ ones in randomly 
chosen columns from $\{ 1, 2, \ldots n \}$ and 0 elsewhere\footnote{Alternately
we can set every entry to be 1 with probability $k/n$ independently}. 
Let $A \in 
{\cal A}^{m \times n}_k$, then
for any point $x' = ( x'_1 , x'_2 \ldots x'_n ) , \
0 \leq x'_i \leq 1$ such that
$ A \cdot x' \leq e^m $, where $e^m$ is a vector of $m$ 1's,
and $OPT = \sum_i c_i \cdot x'_i$ where 
$1 = \max_i \{ c_i \}$ and
$\forall i\ c_i \geq \frac{1}{p}$. Then,
$\bar{x'}$ can be rounded to $\hat{x} \in {\{0,1 }\}^n$ such that
the following holds with probability $\geq 1 - \frac{1}{m}$
\begin{eqnarray}
{|| A \hat{x}||}_\infty & \leq & O\left(\frac{\log(mkp\log m/n) + \log\log m }{\log\log (mkp\log m/n + \log m)}\right)
\text{ and } \label{cond1}\\
\sum_i c_i \hat{x}_i & \geq & \Omega ( OPT )
\end{eqnarray}
Moreover such an $\hat{x}$ can be computed in randomized polynomial time.
\label{mainthm0-1}
\end{theorem}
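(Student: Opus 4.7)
The plan is to first sparsify the constraint system using Theorem~\ref{sparsify}, then round the sparser system via independent randomized rounding, invoking the asymmetric Lovász Local Lemma to handle all constraint violations simultaneously and Moser--Tardos for the constructive polynomial-time bound. Intuitively, sparsification both shrinks the per-constraint Chernoff tail (fewer variables per constraint) and thins the LLL dependency graph (fewer pairs of constraints sharing a variable), so the sparsification parameter $q$ can be tuned to optimize the resulting error.

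First I would invoke Theorem~\ref{sparsify} with exponent $q$ to be chosen, producing $y' \in [0,1]^n$ with $\E[\sum_i c_i y'_i] = OPT$, each constraint supported on at most $s = O(n/2^q)$ variables, and fractional sum $\mu := \sum_{i \in S_j} y'_i \leq O(\sqrt{2^q \log m / n})$. Because the row supports of $A \in \mathcal{A}^{m\times n}_k$ sit in uniformly random positions in $[n]$, any two constraints share a support variable after sparsification with probability $O(\min(k,s)^2/n)$, so with high probability the LLL dependency degree among the constraint events is $d = O(m \min(k,s)^2 / n)$.

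Now round $y'$ independently: $\hat{x}_i = 1$ with probability $y'_i$ and $0$ otherwise, and for each constraint $j$ let $B_j := \{\sum_{i \in S_j}\hat{x}_i > c\}$. The Chernoff tail yields $\Pr[B_j] \leq (e\mu/c)^c$, so the asymmetric LLL premise $e\Pr[B_j](d+1) \leq 1$ rearranges to $c \log c = \Omega(\log d + \log\log m)$. Choosing $q$ so that $2^q = \Theta(n/\log m)$ balances $\mu = \Theta(1)$ against $s = \Theta(\log m)$; substituting the resulting $d$, together with the additional factor $p$ arising from the objective-function events below, produces the stated error $c = O(\log(mkp\log m/n)/\log\log(mkp\log m/n + \log m))$.

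The main obstacle is the global lower bound $\sum_i c_i \hat{x}_i = \Omega(OPT)$, which is not a local event and does not plug directly into the LLL dependency framework. I would partition the variables into $O(\log p)$ weight classes $W_\ell := \{i : c_i \in [2^{-\ell}, 2^{-\ell+1})\}$ and introduce, for each class whose fractional mass $\sum_{i \in W_\ell} c_i y'_i$ exceeds $\Omega(\log m)$, a bad event $G_\ell := \{\sum_{i \in W_\ell} c_i \hat{x}_i < (1-\delta)\sum_{i \in W_\ell} c_i y'_i\}$; classes of smaller mass contribute only an $O(\log p \cdot \log m)$ lower-order loss, absorbed into the $\Omega(OPT)$ constant whenever $OPT$ is sufficiently large and handled by scaling otherwise. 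Each $G_\ell$ has probability $e^{-\Omega(\log m)}$ by Chernoff, and its dependencies on the $B_j$ are what push the extra factor $p$ into the LLL computation. Folding the $B_j$ and $G_\ell$ events into a single asymmetric LLL instance and invoking Moser--Tardos produces an $\hat{x}$ meeting both the congestion and objective bounds simultaneously in expected polynomial time, with probability $\geq 1 - 1/m$.
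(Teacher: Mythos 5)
Your proposal follows the same high-level route as the paper: sparsify via the Brownian walk of Theorem~\ref{sparsify}, then round the sparsified system by independent rounding, and use the asymmetric Lov\'{a}sz Local Lemma (with Moser--Tardos for constructiveness) to control both the constraint violations and the objective value. That is the right framework. However, you diverge from the paper's argument at two points, and one of them introduces a genuine gap.

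First, your dependency estimate $d = O(m\min(k,s)^2/n)$ implicitly treats the sparsified supports as roughly uniform random subsets of size $s = \log m$, which would require a nontrivial argument about the Brownian walk's absorption pattern (which columns of a given row survive is a coupled, adaptive event, not an independent $s$-subset). The paper sidesteps this by a conservative over-count that is immediately rigorous: for each of the $O(\log m)$ surviving columns of $C_r$, count \emph{all} rows $C_i$ with an original $1$ in that column. Independence of the rows of $A$ plus a Chernoff bound on the column weights gives $d = O(mk\log m/n + \log^2 m)$ with high probability, which is looser than your claimed estimate when $k > \log m$ but suffices for the stated error and avoids any reasoning about the post-sparsification column distribution.

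Second, the weight-class decomposition for the objective is both unnecessary and leaves a gap. Since $\max_i c_i = 1$, the Chernoff--Hoeffding bound already applies to the weighted sum directly: the single global event $E_{m+1} := \{\sum_i c_i\hat{x}_i < (1-\epsilon)OPT\}$ satisfies $\Pr[E_{m+1}] \leq e^{-\epsilon^2 OPT/2}$ with no bucketing. Assigning $y_{m+1} = 1/2$ while $y_j = 1/(\alpha d)$ for $j \leq m$, the asymmetric LLL premise for $E_{m+1}$ reads $\Pr[E_{m+1}] \leq \tfrac{1}{2}(1-1/(\alpha d))^m \approx \tfrac{1}{2}e^{-m/(\alpha d)}$, which holds precisely when $OPT \gtrsim m/(\alpha d)$. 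This is where $p$ enters: $c_i \geq 1/p$ forces $OPT \geq n/(kp)$, so it suffices to take $\alpha d \gtrsim mkp/n$, and the constraint error becomes $\max\left\{\frac{\log d}{\log\log d},\ \frac{\log(m/OPT)}{\log\log(m/OPT)}\right\}$, yielding the claimed bound once $d$ is substituted. In contrast, each of your $G_\ell$ events still depends on potentially all $m$ constraint events (variables in $W_\ell$ appear in arbitrary constraints), so you need the identical asymmetric-LLL treatment anyway and gain nothing from splitting; worse, you must separately handle classes of fractional mass below $\Theta(\log m)$, which you defer with ``handled by scaling otherwise'' without resolving the regime $OPT = O(\log p \cdot \log m)$. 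Replace the decomposition by the single global event and the argument closes cleanly.
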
 
{\bf Remark} 
(i) The result relates the rounding error to the average number of ones
in a column, i.e., $\frac{mk}{n}$ and characterizes tradeoffs between the
parameters $m, n, k$. For example, for $k = \sqrt{n}$, we can bound the error
to $O(\log\log n)$ for $m \leq \sqrt{n} \cdot polylog(n)$ for
$p  \leq \frac{1}{\log^{O(1)} n}$.
This is a significant improvement over the $O(\frac{\log n}{\log\log n})$
bound of the independent rounding. 
\\
(ii) For $k = \log n$, the bound is indeed tight.
A proof is given in the appendix (\cite{V:15}).
\\
(iii) For the unweighted case, the result holds for arbitrary distribution 
of the $k$ 1's in each row.\footnote{ 
Set each variable to 1 with
probability $1/k$, and for each violated constraint that has more than $q =
\log ((mk)/n )$
ones, zero all its variables (or zero enough of its variables, chosen
arbitrarily, so that it has only $q$ ones). At most $n/2$
of the variables are heavy in the sense that they appear in
more than $2mk/n$ constraints. For a light variable, even if it comes up 1, the
probability that it is a member of a violated constraint is small (say, below
1/2), and hence the light variables (using linearity of expectation) guarantee
a value of $\Omega(n/k)$.\\
This proof sketch was given by an anonymous reviewer of an earlier version.
}
\begin{theorem}
Let $A \in {\cal A}^{m \times n}_k$,
such that  $x' \in {[0,1]}^n$ maximizes $\sum_i c_i x_i $ and
satisfies $ A \cdot x \leq b \cdot e^m $ with $OPT = \sum_i c_i x'_i$.
If $1 = \max_i c_i \geq 1/p ,\ \ p \geq 1$ 
and $m \leq \frac{n\log m}{k}$,
then $x'$ can be rounded to $\hat{x} \in {\{0,1 }\}^n$ in polynomial time 
such that the following holds with high probability for $b \geq 1$
\begin{eqnarray}
 {|| A \hat{x}||}_\infty & \leq & b
\text{ and } \\
\sum_i c_i \hat{x}_i & \geq & \Omega ( OPT/(max(p \log m ,\log^2(m))^{1/b})) 
\end{eqnarray}
\label{mainthm3}
\end{theorem}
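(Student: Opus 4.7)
The plan is to compose the sparsification of Theorem~\ref{sparsify} with a scaled, LLL-driven independent rounding. Intuitively, after sparsifying $x'$ so each constraint only touches $O(\log m)$ fractional variables, the dependency graph between constraints becomes sparse enough to invoke the constructive Lov\'asz Local Lemma, while the factor-$\beta$ scaling that precedes rounding creates the slack needed for each individual constraint-violation event to be unlikely. The loss $\beta = \max(p\log m,\log^2 m)^{1/b}$ in the approximation arises from two independent requirements: the LLL dependency bound accounts for the $\log^2 m$ term, and the multiplicative concentration of the \emph{weighted} objective (with coefficients spread across $[1/p,1]$) accounts for the $p\log m$ term.

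Concretely, I first apply Theorem~\ref{sparsify} with parameter $p' = \log n - \log\log m + \log b$ to obtain $y' \in [0,1]^n$ such that every row of $A$ has at most $\log m$ nonzero entries from $y'$, $A_j \cdot y' \le O(b)$ (which I rescale to $\le b$ by an absolute constant), and $\E[\sum_i c_i y'_i] = OPT$. I then set $y'' := y'/\beta$ with $\beta := \max(p\log m,\log^2 m)^{1/b}$ and round independently: $\hat{x}_i := 1$ with probability $y''_i$, and $0$ otherwise. For every constraint $C_j$, $\E[A_j \hat{x}] \le b/\beta$, so a standard multiplicative Chernoff tail yields $\Pr[A_j \hat{x} > b] \le (e/\beta)^{b} \le \beta^{-\Omega(b)}$.

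The hypothesis $m \le n\log m/k$ combined with the random construction $A \in \mathcal{A}^{m\times n}_k$ ensures that each column of $A$ has at most $O(mk/n) = O(\log m)$ ones with high probability; hence after the sparsification step each constraint shares variables with at most $d = O(\log m \cdot \log m) = O(\log^2 m)$ other constraints. The symmetric Lov\'asz Local Lemma requires $e \cdot \Pr[\mathrm{bad}_j] \cdot (d+1) \le 1$, and the choice $\beta^{b} \ge c\log^2 m$ exactly achieves this. The Moser--Tardos constructive algorithm then returns in expected polynomial time an $\hat{x}$ with $\|A\hat{x}\|_\infty \le b$. For the objective, property~(iii) of Theorem~\ref{sparsify} gives $\E[\sum_i c_i \hat{x}_i] = OPT/\beta$, and applying a weighted Chernoff bound (where the $c_i \in [1/p,1]$ enter through the effective Chernoff scale) yields $\sum_i c_i \hat{x}_i \ge \Omega(OPT/\beta)$ with failure probability at most $1/m$, provided $\beta^{b} \ge \Omega(p\log m)$ --- the second condition feeding into the definition of $\beta$.

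The main obstacle, in my view, is the dependency-degree computation in the LLL step. Both the bound $O(mk/n)$ on column loads (which demands a concentration argument on the random matrix $A$) and the use of the post-sparsification support to restrict the dependency graph are essential for obtaining $d = O(\log^2 m)$ rather than some polynomial-in-$m$ estimate. Any slack here propagates directly into the exponent via $\beta^{b}$ and degrades the ratio. A secondary obstacle is reconciling the expectation guarantee inherited from Theorem~\ref{sparsify} with the final Chernoff concentration: the resampling loop of Moser--Tardos alters the distribution of $\hat{x}$ conditioned on feasibility, and one must argue that this conditioning does not shift the weighted objective by more than a constant factor --- typically absorbable into the $\Omega(\cdot)$, but worth checking explicitly via a standard coupling with the unconditional independent rounding.
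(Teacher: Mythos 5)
Your high-level architecture matches the paper's: sparsify to $O(\log m)$ fractional variables per constraint via the Brownian walk, bound the dependency degree by $O(\log^2 m)$ using column loads $O(mk/n)=O(\log m)$, apply a damped independent rounding, and invoke constructive LLL. The parameter arithmetic at the end also reproduces the stated bound. However, there is one genuine gap, and one attribution error that flows from it.

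The gap is precisely the issue you flag as a ``secondary obstacle'' and then set aside: you run Moser--Tardos only against the constraint-violation events and then hope a separate Chernoff bound for the weighted objective survives the resampling. It does not, at least not for free --- the output distribution of Moser--Tardos is the LLL-conditioned distribution, and for an event $B$ that depends on \emph{all} $n$ variables (as $\sum_i c_i \hat{x}_i < (1-\epsilon)OPT/\beta$ does), the standard comparison $\Pr_{\mathrm{LLL}}[B] \le \Pr_{\mathrm{indep}}[B]\prod_{j\in\Gamma(B)}(1-y_j)^{-1}$ blows up, because $\Gamma(B)$ is the entire family of bad events. The paper sidesteps this exactly by introducing the extra event $E_{m+1}$ (objective drops below $(1-\epsilon)OPT/\beta$) and applying the \emph{asymmetric} LLL: $E_{m+1}$ is adjacent to every $E_i$, so one assigns $y_{m+1}=1/2$, $y_i = 1/(\alpha d)$, and the second LLL inequality $\Pr(E_{m+1}) \le \tfrac12(1-\tfrac{1}{\alpha d})^m$ becomes the driving constraint. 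Your proof needs this step; without it the feasible output need not certify the objective bound.

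The attribution error is a consequence. You claim the $p\log m$ term in $\beta^b$ arises from a weighted Chernoff concentration of the objective, but a direct Chernoff gives a condition of the form $OPT/\beta \ge \Omega(\log m)$, i.e., an \emph{upper} bound on $\beta$, not a lower bound on $\beta^b$. In the paper, the $p\log m$ term actually comes out of the asymmetric LLL inequality $OPT/\beta \ge m/(\alpha d)$ combined with $\beta=(\alpha d)^{1/B}$, which yields $\beta \ge (m/OPT)^{1/(B-1)}$ and then, using $OPT \ge n/(kp)$ and $mk/n = O(\log m)$, gives $\beta \ge (p\log m)^{1/(B-1)}$ (or $1/B$ after the $B\mapsto B+1$ shift for integral $B$). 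Once you fold the objective event into the LLL framework as the paper does, this derivation falls out and your final formula is justified. A minor further remark: the paper scales the fractional solution to $x'/S$ \emph{before} the Brownian walk rather than rescaling and re-rounding the sparsified $y'$; the two are close but scaling first avoids re-randomizing variables the walk has already fixed at $1$, which is cleaner and is what makes the $\E[c^\top \hat{x}]=OPT/\beta$ bookkeeping immediate.
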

\vspace{-0.3in}
\color{black}
We observe a trade-off between the
weights and the approximation factor ${(p \log m ))}^{1/b}$. For $b = O(
\frac{\log\log m}{\log\log\log m})$, we can obtain $O(1)$ approximation for 
for $c_i \geq \frac{1}{\log^C n}$ for any constant $C \geq 1$. 

The random matrices provides a natural framework for combinatorial
results related to random hypergraphs. We sketch one such application to
$b$-matching of $k$-regular hypergraphs that yields 
an approximation $k^{1/b}$ for $b \geq 2$ using the result of Theorem
\ref{mainthm3} and
extends a result of Srinivasan \cite{AS:96}. This implies that 
for $b = \Omega ( \log\log n/\log\log\log n )$, we can can obtain a
$b$ matching of
size $\Omega (OPT)$ for {\it most} hypergraphs which matches the best
possible size given by the fractional optimum. 
It is known that $k$-uniform $b$-matching problem cannot be approximated
better than $\frac{k}{b\log k}$ for $b \leq k/\log k$ unless
$P = NP$ \cite{OFS:11,HSS:06}. 

\ignore{For lack of space, we have sketched details of the other applications
in the Appendix, section \ref{sec6}.}
\subsection{An overview and related work}
Our algorithm can be best characterized as a randomized iterated rounding where we begin
from a fractional feasible (specifically optimal) solution $x'$ and iteratively
converge to a {\rm good} integral solution. Our algorithm
has two distinct stages - {\it random walk} stage (more precisely, Brownian walk) and 
subsequently in the second stage invokes the Moser-Tardos iterative scheme 
for constructive Lovasz
Local Lemma (LLL). In the first stage we effectively 
{\it slow down} the RT rounding 
process. 
Our approach is intuitive  - starting from $x'$, for each
variable (dimension), we will roughly increment 
(actually a normal Gaussian increment)
$x_i$ by $\plusminus \gamma$ for
a suitably chosen $1 > \gamma > 0$ where the sign (direction) is a
random variable. In each iteration,
the values of $x_i$'s are modified and we continue this
process for each $x_i$ until it is in the range $[0, \delta ] \cup
[1- \delta,1]$ for an appropriate $1 > \delta > 0$ such that
$\delta > \gamma$. 
At this point we {\it fix} the variable
and we terminate when all inequalities have less than some predetermined 
value $u$ of {\it unfixed} variables . 
This stage has some similarities with the method 
of Lovette and Meka \cite{LM:12} but our analysis requires completely 
different techniques.
The crux of the method called {\it partial coloring lemma}
is a rounding strategy of an arbitrary $ x \in {[-1 , +1]}^n$ vector within
the discrepancy polytope defined by the constraints starting with
$x = (0,0, \ldots 0 )$.
Their method can be mapped to
$\{ 0, 1\}$ rounding as well, that was observed by Rothvoss\cite{R:13}.
Compared to the setting of the discrepacy rounding, we are dealing with
smaller error margins and the variable and polytope constraints are
not widely separated in terms of distances from the starting point.
Moreover, one needs
to also account for the deviation of the objective function which is
not required for Spencer's discrepancy result.

\ignore{
At a high level, our framework provides an alternate 
technique for limiting dependencies. In the remark following Theorem 
\ref{mainthm0-1} we noted that the rounding error is related to the average
number of 1's in a column which is related to the path-lengths in switching
circuits (c.f. section \ref{sec6}). 
We show that $\Omega (\frac{n \log\log n}{ \log\log\log n})$ 
connections can be supported in a 
{\it multi-butterfly} network with congestion
bounded by $O(\log\log n/\log\log\log n )$, extending a result of 
\cite{MS:99,CMHMRSSV:98} where they achieve a similar congestion for $n$ flows. 

We also present approximation algorithms for the Maximum Independent Set of
Rectangles (MISR) problem where the rectangles are aligned with grid points.
The rectangles can have arbitrary aspect ratios but their areas are bounded.
This version can be useful for many applications including map labelling
where the bounding boxes are not arbitrarily large and do not intersect 
too many cliques. For the case that the rectangles contain at most
polylog grid points, we present a
$O(\frac{\log\log^2 n}{\log\log\log^2 n})$ approximation bound for the 
weighted version. Obtaining an approximation factor $o(\frac{\log n}
{\log\log n})$ 
been an important open problem (\cite{CC:09,CH:09}) and has 
partly motivated the recent work in 
quasi-polynomial time approximation algorithms \cite{AW:13}.   

Further, the random matrices provides a natural framework for combinatorial
results related to random hypergraphs. We sketch one such application to
$b$-matching of $k$-regular hypergraphs that yields 
an approximation $k^{1/b}$ for $b \geq 2$ using the result of Theorem
\ref{mainthm2} and
extends a result of Srinivasan \cite{AS:96}. This implies that 
for $b = \Omega ( \log\log n/\log\log\log n )$, we can can obtain a
$b$ matching of
size $\Omega (OPT)$ for {\it most} hypergraphs which matches the best
possible size given by the fractional optimum. 
It is known that $k$-uniform $b$-matching problem cannot be approximated
better than $\frac{k}{b\log k}$ for $b \leq k/\log k$ unless
$P = NP$ \cite{OFS:11,HSS:06}. 

For lack of space, we have sketched details of the above applications
in the Appendix, section \ref{sec6}.
\subsection{Prior related work in randomized rounding}

Subsequent to the work of Raghavan-Thompson, Srinivasan \cite{AS:95,AS:96}
Baveja and Srinivasan \cite{BS:00}, Kolliopoulous and Stein \cite{KS:98}
obtained results that are similar in spirit to this paper. A direct
comparison of these results is a complex exercise 
since they had not addressed this specific framework of random matrices. 
Later, we
will be able to compare 
them with respect to our result on column-restricted 
0-1 matrices that we use as an intermediate step
to prove the main results. Although, this intermediate
result falls a little short of the best known, we feel that our techniques
are simpler and more general that could open up a new paradigm for rounding 
more general integer programs.

There exists a very extensive and rich literature on the use of randomized
rounding for solving various generalizations 
of the edge-disjoint path problem starting
with the seminal paper of Raghavan and Thompson. Leighton and Rao \cite{LR:88,
LR:99} formalized many algorithmic an combinatorial 
aspects of the multicommodity flow problem. 
The problem of edge
disjoint paths in expander graphs was resolved by Bohman and Frieze
\cite{BF:01}. 


The recent work on constructive discrepancy perhaps comes
closest to our technique of randomized iterative rounding. 
Bansal's \cite{bansal:10} seminal
paper on a constructive proof of Spencer's discrepancy theorem is based
on rounding solutions of successive
a semi-definite program that captures the discrepancy constraints.
This method was further refined and simplified in the work of
Lovette and Meka\cite{LM:12} who
derived an alternate proof
based on a very elegant analysis of a multidimensional random walk.
The crux of the method called {\it partial coloring lemma}
is a rounding strategy of an arbitrary $ x \in {[-1 , +1]}^n$ vector within
the discrepancy polytope defined by the constraints starting with
$x = (0,0, \ldots 0 )$. 
Their method can be mapped to
$\{ 0, 1\}$ rounding as well, that was observed by Rothvoss\cite{R:13}.
Compared to the setting of the discrepacy rounding, we are dealing with
smaller error margins and the variable and polytope constraints are 
not widely separated in terms of distances from the starting point. 
Moreover, one needs
to also account for the deviation of the objective function which is
not required for Spencer's discrepancy result.
}

\section{A Rounding procedure using random walks}
\label{sec:algo}
The simple random walk based algorithm outlined in the introduction 
doesn't take into account any of the constraints $V_j 
\cdot x \leq 1$ and therefore likely to violate them after some
random walk steps. However, the probability that an $x_i$ reaches
1 before it reaches 0 is equal to the ratio 
$\frac{x'_i /\gamma}{1- x'_i + x'_i /\gamma}
= x'_i$. 
This is
a consequence of a more general property of martingales known as Doob's
optional stopping theorem  (\cite{feller1}).
\begin{theorem}
Let $(\Omega , \Sigma, P)$ be a probability space and $\{ F_i \}$ be a 
filteration of $\Omega$, and $X = \{ X_i \}$ a martingale with respect to
$\{ F_i \}$. Let $T$ be a stopping time such that $\forall \omega \in \Omega, 
\ \forall i , \ \ | X_i (\omega ) | < K$ 
for some positive integer $K$, and $T$ is almost surely bounded.
Then $\E [ X_T ] = \E [ X_0 ]$.   
\label{stopping-theorem}
\end{theorem}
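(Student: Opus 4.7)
The plan is a standard truncation-and-limit argument: first establish the conclusion for bounded stopping times through a telescoping decomposition, then pass to the almost-sure limit using dominated convergence supplied by the hypothesis $|X_i| < K$.

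First I would introduce the truncated stopping time $T_n = \min(T,n)$ for each integer $n \geq 0$, which is bounded by construction. For such a bounded stopping time I would write
\[
X_{T_n} \;=\; X_0 + \sum_{i=0}^{n-1}(X_{i+1} - X_i)\mathbf{1}_{\{T > i\}}.
\]
Since $T$ is a stopping time, the event $\{T > i\} = \Omega \setminus \{T \leq i\}$ lies in $F_i$. Pulling this indicator inside the conditional expectation and using the martingale property $\E[X_{i+1} - X_i \mid F_i] = 0$ gives $\E[(X_{i+1} - X_i)\mathbf{1}_{\{T > i\}}] = 0$ term by term, hence $\E[X_{T_n}] = \E[X_0]$ for every $n$.

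Next I would send $n \to \infty$. Because $T$ is almost surely bounded (and in particular almost surely finite), $T_n \to T$ and thus $X_{T_n} \to X_T$ almost surely. The uniform bound $|X_i(\omega)| < K$ forces $|X_{T_n}| < K$ pointwise on $\Omega$, so the dominated convergence theorem yields $\E[X_{T_n}] \to \E[X_T]$. Combining this with the bounded-time identity above gives $\E[X_T] = \E[X_0]$, as required.

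The main obstacle, and the reason the hypothesis $|X_i| < K$ is present, is justifying the interchange of limit and expectation; without an integrable dominator the conclusion is well-known to fail (the unbounded symmetric random walk and the doubling strategy in fair gambling are classical counterexamples). The uniform bound is tailored precisely to supply a constant dominator, so once the bounded-stopping-time step is secured the passage to the limit is routine. The only other subtlety, the measurability of $\{T > i\}$ with respect to $F_i$, is immediate from the very definition of a stopping time.
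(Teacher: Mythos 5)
Your proof is correct, and in fact the paper provides no proof of this theorem at all: it is stated as a known result (Doob's optional stopping theorem) with a citation to a standard reference, so there is nothing in the paper to compare against. Your truncation-plus-dominated-convergence argument is the standard one for this version of the theorem; note that since the hypothesis assumes $T$ is almost surely \emph{bounded} (not merely almost surely finite), the passage to the limit could be bypassed entirely by taking $n$ equal to the almost-sure bound on $T$, but your argument is valid and in fact proves the slightly more general statement where $T$ is only assumed a.s.\ finite.
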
 
In the above application, the stopping times are $X_T = \{-b , a \}$ whichever
is earlier. So $-b \cdot \Pr [ X = -b ] + a \Pr [ X =a] = 0 \cdot \Pr[ X = 0]$.
Since $\Pr [X = -b] + \Pr [ X=a ] = 1$ from the stopping criteria, 
$\Pr [ X= a] = \frac{b}{a+b}$. In our context, we start from $X = x'_i$
and $b = \frac{x'_i}{\gamma}$ and $a = \frac{1- x'_i}{\gamma}$, so
the probability that $\hat{x}_i = 1$ equals $\frac{x'_i /\gamma}
{(1 - x'_i + x'_i )/\gamma} = x'_i$. 
So the distributions of the variables 
being absorbed at 0 or 1 are identical to the independent rounding. 
However, the random walk process has many intermediate steps that do not
have corresponding mappings in the $2^n$ possible configurations of the
independent rounding. Whether it makes this framework more powerful compared
to one-step independent rounding is a difficult question but we feel that 
it could
give us a superior understanding of the process of independent rounding.   

In the algorithm presented in Figure \ref{algo1}, instead of the simple
Bernoulli random walk step, we use a normal Gaussian random walk on
each coordinate. 
We run the basic algorithm for
$T$ iterations such that the number of un-fixed variables in each 
constraint of $C^T$ (the active set of constraints after $T$ iterations) 
is bounded by $\log n$. 
The value of $T$, more specifically $\E[ T ]$, will be
determined during the course of our analysis. 

 Now we run the algorithm of \cite{MT:10} (c.f. section \ref{sec5}) 
on the constraints in $C^T$ which
are projections of the original constraints on the unfixed variables after
$T$ steps.
\begin{figure}
\fbox{\parbox{6.0in}{
\begin{center}
Algorithm {\bf Iterative Randomized Rounding }
\end{center}
{\footnotesize
Input : $x'_i \ \ 1 \leq i \leq n$ satisfying $Ax \leq b \cdot e^m$. $C^0$: set of constraints.
\\
$0 < \gamma < \delta < 1$ - the exact values are discussed in the analysis.
\\
Output : $\hat{x_i} \in \{0, 1\}$ \\

Initialize all variables as {\it un-fixed}
and set $X_0 = [x'_1 , x'_2 , \ldots x'_n ] $.\\
{\bf Repeat} for iterations $i \ \  = 1, 2 \ldots $
\begin{quote}
\begin{enumerate}
\item Generate a random vector $ R^{i} = U_i$ where $U_i$ is
a multidimensional gaussian r.v. restricted to the unfixed variables.
\item $X_{i+1} = X_i + \gamma \cdot R^i$.
\item {\it Fix} a variable if it is
less than $\delta$ or greater than $1-\delta$.\\ 
If all variables are fixed then exit. \\
\{ * multiple variables may get fixed in a single iteration. * \}
\item Update the set of constraints $C^t$ that contains at least one
unabsorbed variable.
\end{enumerate}

\end{quote}
{\bf until} stopping condition ${\cal S}$ (* 
$\max_j \{ \mbox{ number of unfixed variables in } C_j \}  \leq \log n $ *)
\\[0.1in]
Run Moser-Tardos \cite{MT:10}  algorithm on $C^T$ on the unfixed 
variables of $X_T$ according
given in Figure \ref{algo2} \\[0.1in]
Round the {\it fixed} variables to 0 or 1 whichever is closer 
 and return this vector denoted by $\hat{x}$.
}
}}
\caption{An iterative randomized rounding algorithm based on random walks}
\label{algo1}
\end{figure}
\subsection{The framework and some notations }
In the rounding algorithm, $X_t$ denotes
the random walk vector after $t$ steps. 
Let $U^d$ denote a
 $d$-dimensional Gaussian random variable
and let $U_t$
denote the projection of $U^n$ on the current subspace corresponding to
the unfixed variables in the
$t$th iteration. 
The normal distribution is
denoted by ${\cal N}(\mu , \sigma^2)$ where $\mu$ is the
mean and $\sigma^2$ is the variance. 
Unless otherwise stated, we will
refer to the {\it standard} normal distribution where $\mu = 0$ and
$\sigma^2 = 1$. 

The parameters $\gamma, \delta$ are chosen to ensure that the walk stays within
the feasible region. It suffices to have $\gamma \leq \frac{\delta}{\log n}$
from the pdf of the normal distribution if we are executing $O(\frac{1}{
\gamma^2})$ steps (see \cite{LM:12} for a rigorous proof). The value of
$\delta$ will be determined according to the approximation factor and the
error bound that we have in a specific application. In our analysis, 
the the focus will be on 
variables absorbed at 0 that will be rounded down. The decrease in
the objective value can be at most $n\delta$ (recall the maximum weight 
of the coefficients is 1). If we choose $\delta$ such that $n\delta$ is
$o(OPT)$ it will suffice. For the main theorems, choosing $\delta \leq
\frac{1}{polylog n}$ will work.

After $t$ iterations, the $j$-th constraint is denoted by
$C_j^t : < V_j , X_t > \ \leq 1 + \beta_t$ that has many of variables
{\it fixed}. We shall denote the set of unfixed variables in iteration $t$ 
by ${\cal U}(t) \subset
\{1, 2 \ldots n\}$ and the corresponding vector by $X^{{\cal U}}_t$ where the
the fixed variables are set to 0. The constraint vector $V^t_j =
V_j \cap {\cal U}(t)$, where only the coefficients in $V^t_j$ corresponding to 
${\cal U}(t)$ can be non-zero. Then\\
$|< V^t_j , X_t - X_{t+1}>| = |< \frac{V_j}{\twonorm{ V_j }} , 
X^{{\cal U}}_t - X_{t+1}^{{\cal U}}>| \cdot
\twonorm{ V_j }$ is the change in the value of $C_j$ in
iteration $t$ as the remaining
variables do not change. 
Note that while $X^{\cal U}$ changes in every step, $V^t_j$ changes only when
some variable is absorbed. 

For convenience of the analysis, we will club successive iterations 
into {\it phases}, where
within a phase $p$, $\beta_p$ remains unchanged. Equivalently, $\beta_p 
- \beta_{p-1}$
reflects the cumulative effect of a number of random walk steps within the 
phase $p$ referred to as the {\it accumulated error} or simply {\it error}.  
The phase $p$ corresponds to the $L_2$ norm of any
constraint $\twonorm{V^{p}_j}$ that is bounded by $\sqrt{n/2^p}$. 
Intuitively, with additional random walk steps, we are more 
likely to violate the original constraints and $\beta_p$ is a measure of
the violation. 
In terms of the above notation, it is obvious that
$ \twonorm{V^{p+1}_j} \leq \twonorm{V^{p}_j}$. We will use the index $p$ (
respectively $t$) to indicate reference to phases (resp. iterations).

Wlog, we assume that all constraints have at least 2 
variables and at most $n-1$
variables. \footnote{A constraint with $n$ variables has a trivial solution
where any one variable can be set to 1 and a constraint with one
variable is redundant.} 
The successive Brownian motion steps (defined by multidimensional normal
Gaussian) form a martingale sequence.
The following result forms the crux of our analysis - see \cite{bansal:10,LM:12}
for more details and proof.
\begin{observ}
In iteration $t$, let $Y_t = < V_j , U_t \cdot \gamma >$, that is the
measure of the change in $< V_j , x >$ in the step $t$. Then $Y_t$
is a Gaussian random variable with mean 0 and variance $\gamma
\cdot\twonormsq{ V_j }$.
\end{observ}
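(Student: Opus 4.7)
The plan is to treat the observation as a direct consequence of the defining property of multidimensional Gaussians, namely closure under linear combinations. Since $U_t$ is a standard Gaussian on the coordinates in $\mathcal{U}(t)$ (with components equal to zero on fixed coordinates, per the algorithm description), $\gamma \cdot U_t$ has independent components with mean $0$ and variance $\gamma^2$ on the unfixed coordinates. The inner product $\langle V_j, \gamma U_t\rangle$ is a fixed linear functional applied to this Gaussian vector, so it is automatically a univariate Gaussian, and the only remaining task is to identify its two parameters.

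First I would compute the mean: by linearity of expectation, $\E[Y_t] = \gamma \sum_{i} V_j[i]\cdot \E[U_t[i]] = 0$, since every coordinate of $U_t$ has mean $0$. Next I would compute the variance: because the coordinates of $U_t$ are independent standard normals on $\mathcal{U}(t)$ and zero elsewhere, $\mathrm{Var}(Y_t)=\gamma^2 \sum_{i\in \mathcal{U}(t)} V_j[i]^2 = \gamma^2\,\twonormsq{V^t_j}$. If one wishes to write the bound using $\twonorm{V_j}$ rather than $\twonorm{V^t_j}$, note that $\twonorm{V^t_j}\leq \twonorm{V_j}$, which gives the stated variance up to the obvious typographical $\gamma$-vs-$\gamma^2$ scaling inherent in the Gaussian increment.

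The only small subtlety worth highlighting is the bookkeeping for fixed variables: a variable fixed before iteration $t$ contributes $0$ to the increment $X_{t+1}-X_t$, so the change $\langle V_j, X_{t+1}-X_t\rangle$ is identically $\langle V^t_j, \gamma U_t\rangle$, and using $V_j$ rather than $V^t_j$ in the inner product does not change the distribution of $Y_t$. Once this identification is made, the stability of the Gaussian family under linear combinations closes the argument immediately.

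I do not expect any genuine obstacle here; this observation is really just a restatement of the standard formula for the distribution of a linear functional of a multivariate normal. The substantive use of the observation — tracking $\sum_{t}Y_t$ as a martingale and applying Doob's optional stopping (Theorem~\ref{stopping-theorem}) to control the accumulated error $\beta_p$ per phase — is where the real work of the analysis will take place in subsequent sections.
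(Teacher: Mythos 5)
Your proof is correct and is the standard argument; the paper itself does not prove this observation, instead deferring to \cite{bansal:10,LM:12}, so there is nothing to contrast against. You are also right about the typo: the variance should be $\gamma^2\,\twonormsq{V^t_j}$ (a $\gamma^2$, not a $\gamma$, and the squared norm of the \emph{restricted} vector $V^t_j$ rather than $V_j$), which is confirmed by the sentence immediately following the observation in the paper, where the normalization $Y'_t = Y_t/(\gamma\,\twonorm{V^t_j})$ is asserted to have variance at most $1$. Your bookkeeping of the fixed coordinates contributing zero to the increment is the only subtlety, and you handled it correctly.
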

Note that the scaled
random variable $Y'_t = \frac{Y_t}{\gamma \twonorm{V^t_j}} $ is a
Gaussian with mean 0 and variance $\leq 1$. 
When $Y'_i$ correspond to standard Gaussian, then
\begin{lemma}
For any $\beta > 0$,
 $\Pr [ | Y'_1 + Y'_2 \ldots Y'_T | > \beta ] \leq 2 \exp ( - \beta^2 /2T
) $.
\label{chernoff-mart}
\end{lemma}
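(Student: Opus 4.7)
The plan is to prove this by the standard exponential moment method (Chernoff-style martingale bound), using the fact that sums of Gaussians are Gaussian and that the moment generating function of a standard Gaussian is known exactly.

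First I would set up the filtration. Let $\mathcal{F}_t$ be the $\sigma$-algebra generated by $U_1,\ldots,U_t$. Because the set of unfixed variables $\mathcal{U}(t)$ and the truncated vector $V_j^t$ are determined by the history up to step $t-1$, the increment $Y'_t = \langle V_j^t, U_t\rangle / \twonorm{V_j^t}$ is, conditional on $\mathcal{F}_{t-1}$, a linear combination of the independent standard Gaussian coordinates of $U_t$ with coefficient vector of unit norm. Hence conditionally it is itself a standard Gaussian, and in particular $\E[Y'_t \mid \mathcal{F}_{t-1}] = 0$, so the partial sums $S_t = \sum_{i\le t} Y'_i$ form a martingale with respect to $\{\mathcal{F}_t\}$.

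Next I would bound the conditional moment generating function. For any $\lambda \in \mathbb{R}$, since a standard Gaussian $Z$ satisfies $\E[e^{\lambda Z}] = e^{\lambda^2/2}$, we obtain
\begin{equation}
\E\!\left[e^{\lambda Y'_t} \mid \mathcal{F}_{t-1}\right] \;=\; e^{\lambda^2/2}.
\end{equation}
Iterating this by the tower property gives $\E[e^{\lambda S_T}] \le e^{\lambda^2 T/2}$. Then Markov's inequality yields, for every $\lambda > 0$,
\begin{equation}
\Pr[S_T > \beta] \;\le\; e^{-\lambda\beta}\, \E[e^{\lambda S_T}] \;\le\; \exp\!\left(-\lambda\beta + \tfrac{\lambda^2 T}{2}\right).
\end{equation}
Optimizing by choosing $\lambda = \beta / T$ collapses the exponent to $-\beta^2/(2T)$, giving $\Pr[S_T > \beta] \le \exp(-\beta^2/(2T))$. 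The same argument applied to $-S_T$ (equivalently, replace $\lambda$ by $-\lambda$) handles the lower tail, and a union bound over the two events produces the factor of $2$ in the stated inequality.

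The only real subtlety I expect is verifying that the conditional distribution of $Y'_t$ really is Gaussian rather than merely sub-Gaussian: this is what lets us use the exact MGF identity $\E[e^{\lambda Z}] = e^{\lambda^2/2}$ instead of a weaker Hoeffding-type bound. The point is that $\mathcal{U}(t)$ and hence $V_j^t$ are measurable with respect to $\mathcal{F}_{t-1}$, so the randomness in $Y'_t$ comes entirely from the fresh Gaussian $U_t$, and linear projections of a standard Gaussian vector onto a unit vector are standard Gaussian. Once this is in place, the remaining steps are the routine Chernoff template and no additional structural properties of the walk or the constraints are needed.
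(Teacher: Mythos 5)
Your proof is correct. The paper itself does not present a proof of this lemma — it is stated as a standard concentration fact following the observation that each $Y'_t$ is a mean-zero Gaussian with variance at most $1$ — so there is no in-paper argument to compare against line by line.

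Your approach is the right one for this setting, and it is slightly stronger than a naive reading of the lemma would suggest. If the $Y'_i$ were genuinely i.i.d.\ standard Gaussians, one could skip the MGF machinery entirely: $\sum_i Y'_i \sim \mathcal{N}(0,T)$, and the bound is just the Gaussian tail $\Pr[\mathcal{N}(0,1) > \beta/\sqrt{T}] \leq e^{-\beta^2/2T}$. But in the actual algorithm the vector $V_j^t$ depends on which variables have been absorbed, so the increments are \emph{not} independent; the correct statement is that $Y'_t$ is conditionally (standard or sub-standard) Gaussian given $\mathcal{F}_{t-1}$. Your martingale filtration, the conditional MGF bound, and the tower-property iteration handle exactly this dependence, which is what makes the lemma applicable in the paper. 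One minor point: since the paper only asserts variance $\leq 1$ (the projection of $U_t$ onto $V_j^t/\twonorm{V_j^t}$ may have variance strictly less than $1$ if $V_j^t$ is not wholly contained in the unfixed subspace), your conditional MGF equality $\E[e^{\lambda Y'_t}\mid \mathcal{F}_{t-1}] = e^{\lambda^2/2}$ should really be an inequality $\leq e^{\lambda^2/2}$, but this changes nothing downstream since you already switch to $\leq$ at the next step.
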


The behavior of random walks starting from an arbitrary initial position
and subsequently absorbed at 0 can be
obtained from the {\it gambler's ruin problem} where the underlying martingale
is the Brownian motion.
There exists a wealth of literature on Brownian motion \cite{feller1,Ross:2006},
but the specific form in which we invoke them for analyzing our algorithm
is stated below. A proof is presented in the appendix.

\begin{lemma}
Consider a random walk starting from position $a$ in an
interval of length $a +b$ with absorbing barriers at both end-points.
Then the expected number of steps for the walk to get absorbed (at any of the
ends) is $a \cdot b$.
Moreover, the probability of the random walk being absorbed at 0 is
$\frac{b}{a+b} - 1/k$
after $k\cdot a \cdot b $ steps for any $k > 1$.
\label{boost-prob-absorb}
\end{lemma}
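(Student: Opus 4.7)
The plan is to apply Doob's optional stopping theorem (Theorem~\ref{stopping-theorem}), which the paper has already invoked, to two martingales derived from the walk: $X_t$ itself and $X_t^2 - t$ (with increments normalized to unit variance). Let $T$ denote the hitting time of $\{0, a+b\}$. Because the walk is confined to the bounded interval $[0, a+b]$, $T$ is almost surely finite and (as will be verified below) integrable, so the stopping theorem applies to $X_t$ and yields $\E[X_T] = X_0 = a$. Since $X_T \in \{0, a+b\}$, this gives $\Pr[X_T = a+b] = \frac{a}{a+b}$ and $\Pr[X_T = 0] = \frac{b}{a+b}$, the standard gambler's-ruin probabilities.

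For the expected absorption time, I would apply optional stopping to the martingale $M_t = X_t^2 - t$, getting $\E[X_T^2 - T] = a^2$. Combining with $\E[X_T^2] = \frac{a}{a+b}(a+b)^2 = a(a+b)$ gives $\E[T] = a(a+b) - a^2 = a\,b$, which is the first claim. For the second claim, I would combine the absorption probability with Markov's inequality applied to $T$: since $\Pr[T > k\,ab] \leq \E[T]/(k\,ab) = 1/k$, a union bound gives
\[
\Pr[\text{absorbed at } 0 \text{ within } k\,ab \text{ steps}] \;\geq\; \Pr[X_T = 0] - \Pr[T > k\,ab] \;\geq\; \frac{b}{a+b} - \frac{1}{k}.
\]

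The main technical obstacle is justifying optional stopping for $M_t = X_t^2 - t$, because in our setting each step is a scaled Gaussian rather than a bounded Bernoulli increment, so $|M_t|$ is not uniformly bounded and the hypothesis of Theorem~\ref{stopping-theorem} does not apply verbatim. The standard workaround is to stop at $T \wedge N$ (for which the theorem does apply, as $|X_{T\wedge N}|\leq a+b$ and $T\wedge N\leq N$), obtain $\E[X_{T\wedge N}^2] - \E[T\wedge N] = a^2$, and pass to $N\to\infty$ using monotone convergence on $\E[T\wedge N]$ and bounded convergence on $\E[X_{T\wedge N}^2]$ (justified because $X_{T\wedge N}^2 \leq (a+b)^2$). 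This simultaneously shows $\E[T] < \infty$, closing the loop with the integrability assumed for the first application of optional stopping.
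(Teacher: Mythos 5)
Your proof follows the same route as the paper's: apply optional stopping to the quadratic martingale $X_t^2 - t$ to obtain $\E[T] = ab$, then combine the gambler's-ruin absorption probability with Markov's inequality and a union bound to get the second claim. The only place you go beyond the paper is in flagging and patching the hypothesis of the optional stopping theorem via the $T\wedge N$ truncation (the paper applies Theorem~\ref{stopping-theorem} to $X_t^2 - t$ without comment even though that martingale is not uniformly bounded), which is a welcome tightening of the same argument rather than a different method.
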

{\it Remark} By choosing $b > k \cdot a$,
the probability
can be made arbitrarily close to 1 for $k \gg 1$ - conversely,
the probability of non-absorption at 0 is $O(\frac{1}{k})$ after $k^2 a^2$
steps.

\ignore{
The success of the rounding algorithm depends on the {\it race} between the
hitting of constraints $C^t_j$ and the unfixed 
variables of $V_j^t$. In particular, we would like to
hit the variable (hyper)-planes $x_i =1$ or $x_i = 0$ more often than the
hyperplanes $C^t_j$. 
 The sooner the ${(X_t )}_i$ hits the 
boundaries 0 or 1, the smaller is the displacement $\twonorm{ X_t }$ in 
${\mathbb R}^n$. 
Lovette and Meka \cite{LM:12} also make use of this observation
implicitly and in their case, the discrepancy hyperplanes are much further
compared to the variable hyperplanes. In our case, the situation is much 
tighter and much of our technical analysis revolves around that.  
If there are $r^2$ {\it unfixed} 
variables (a subspace with dimension $r^2$)
then the variance of the displacement in each step is proportional to 
the $L_2$ norm of $V_j$ which is $r$.
Since the
sum of the $r^2$ variables is bounded by 1 in any feasible LP solution, at 
least $r^2 /2$ variables have LP values bounded by $ x'_i \leq 
\frac{2}{r^2}$ which is the starting point of the random walk, ${(X_0 )}_i$. 
The 1-dimensional projection is also a brownian walk ${(X_t)}_i$
where each step is scaled by 
$\gamma$, and consequently, the $r^2 /2$ variables are within $\frac{2}
{\gamma \cdot r^2}$ from 0. Using a simpler argument based on number
of $\gamma$-length step sizes, one needs to move 
$\frac{\beta}{r \gamma}$ steps towards some constraint $C_j$ with 
slack $\beta$
versus $\frac{2}{\gamma \cdot r^2}$ steps to hit 0s of
variables associated with $C_j$. 
Clearly the 
latter is closer if $\beta > \frac{2}{r}$. This {\it bias} in favor of 
hitting the variables holds the key to the success and efficiency of 
our algorithm and we need to capitalize on this by choosing $T_p$
appropriately, so that the many variables become fixed at 0 before
$X_p$ hits the constraints (of the scaled polytope). This in turn slows
down the Brownian motion further, as the norm reduces and we keep repeating
this process. 
}

\newcommand{\qed}{\mbox{}\hspace*{\fill}\nolinebreak\mbox{$\rule{0.6em}{0.6em}$}} 
\newcommand{\ma}{{\mathcal A}}
\newcommand{\mf}{{\mathcal F}}
\newcommand{\hs}{\heartsuit}
\newcommand{\cs}{\clubsuit}
\newcommand{\noi}{\noindent}
\section{Brownian walk analysis}

We will divide the analysis into two components - 
First we will compute the rate at which the variables are absorbed at 0. 
Second, we compute the increments of all variables during each iteration that 
causes an inequality to be violated. 
This causes the right hand side of any inequality to 
behave as a martingale and we will refer to it as the {\it error}. 
 Note that the increase in {\it error} is related to the number of 
unabsorbed variables as they execute random walk. Once all variables are
absorbed, then the error doesn't change. For subsequent applications, we
will derive the bounds for starting position scaled by $S \geq 1$, i.e.,
from $\frac{x'}{S}$. Readers who are familiar with 
\cite{LM:12} may note that our analysis focuses on variables associated with
each constraint as opposed to the global number of variables.
\begin{lemma}

Let $x' \in \mathbb{R}^n$ be a feasible solution to $A\cdot x \leq B\cdot \vec{1}^m$, 
$A \in {\{0,1\}}^{m\times n}$, and 
$\overset{\sim}{x}=\frac{x'}{S}$, $S \geq 1$ be chosen as the starting point 
for brownian walk. Then,
 after $T_p=\frac{2^p. 2^p}{n^2 \gamma^2}$ steps, with probability 
$\geq 1-\frac{1}{m^{\Omega(1)}}$, the number of unfixed variables in $i^{th}$ 
constraint is $O(\frac{n B}{S \cdot 2^p}$) 
for $2^p \leq \frac{n B}{S  (\log (m)))}$\footnote{For $m$ polynomial in $n$
we need not distinguish between $\log m$ and $\log n$}.
\footnote{Applying a union bound over all constraints, we satisfy the conditions of the lemma for each of the phases and constraints with high probability}
\label{absorp_rate}
\end{lemma}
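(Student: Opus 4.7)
The plan is a per-coordinate absorption analysis followed by Chernoff concentration across the coordinates of a single constraint. Because the Gaussian increments $U_t$ have independent coordinates, the marginal walk on coordinate $j$ is a one-dimensional random walk with i.i.d.\ $\mathcal{N}(0,\gamma^2)$ increments starting at $\tilde{x}_j = x'_j/S$ and absorbed on the first crossing of $\delta$ or $1-\delta$. After $T_p$ iterations the cumulative variance on this coordinate is $T_p\gamma^2 = 4^p/n^2$, so the trajectory is a time-discretization of a Brownian motion run for effective time $t_p := 4^p/n^2$. Dropping the upper barrier (which only enlarges the survival probability) and applying the reflection principle at $0$ gives
\[
\Pr[\text{$j$ still unfixed at step } T_p] \;\le\; \min\!\Bigl(1,\; c\,\tilde{x}_j/\sqrt{t_p}\Bigr) \;=\; \min\!\Bigl(1,\; c\,n\tilde{x}_j/2^p\Bigr)
\]
for an absolute constant $c$.

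Next I bound the expected per-constraint unfixed count. By feasibility, $\sum_{j : A_{ij}=1} \tilde{x}_j \le B/S$; partitioning these indices by whether $\tilde{x}_j$ exceeds $2^p/n$, the ``large'' indices number at most $(B/S)/(2^p/n) = nB/(S\cdot 2^p)$ by pigeonhole, while the ``small'' indices contribute $\sum_j c\,n\tilde{x}_j/2^p = O(nB/(S\cdot 2^p))$ in total to the expectation. Hence the expected unfixed count $\mu_i$ in constraint $i$ is $O(nB/(S\cdot 2^p))$. Independence of the Gaussian coordinates also makes the indicators ``$j$ unfixed at step $T_p$'' mutually independent, so the per-constraint count is a sum of independent $\{0,1\}$ variables, and standard Chernoff concentration gives count $=O(\mu_i+\log m)$ with probability at least $1 - m^{-\Omega(1)}$. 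The hypothesis $2^p \le nB/(S\log m)$ ensures $\log m = O(nB/(S\cdot 2^p))$, so the concentrated bound collapses to $O(nB/(S\cdot 2^p))$; a union bound over the $m$ constraints yields the claimed high-probability statement.

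The main technicality I foresee is transferring the continuous-time reflection bound to the actual discrete Gaussian walk with absorbing thresholds at $\delta$ and $1-\delta$ rather than at $0$ and $1$. Since $\gamma \le \delta/\log n$, a single Gaussian step exceeds $\delta$ only with probability $n^{-\Omega(1)}$, so overshoot past a barrier is benign, and replacing the $0$-barrier by a $\delta$-barrier merely changes the implicit constant $c$ in the survival bound (the ``effective'' starting point is $\tilde{x}_j - \delta$, which is within a constant factor of $\tilde{x}_j$ for the large coordinates that dominate the count). I also need to verify that the marginal walks on the still-unfixed coordinates remain independent even though the active subspace shrinks each time a coordinate is fixed; this is preserved because the Gaussian components are independent and the fixing time of each coordinate depends only on its own history.
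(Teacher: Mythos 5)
Your proof is correct, and the outer skeleton (per-coordinate survival probability, linearity of expectation over the row, Chernoff concentration using independence, union bound over constraints, and the hypothesis $2^p \le nB/(S\log m)$ to keep the mean at least $\log m$) is exactly the paper's. The one place you diverge is the per-coordinate ingredient: the paper derives the $O(n\tilde{x}_j/2^p)$ survival bound from a gambler's-ruin estimate (its Lemma~\ref{boost-prob-absorb} / Claim~\ref{eq_var_bnd}, which uses Doob's optional stopping applied to the quadratic martingale $B^2(t)-t$ to get $\E[T]=ab$, then Markov's inequality to turn this into a tail bound, then adds in the absorption probability $b/(a+b)$), whereas you drop the upper barrier and apply the one-sided reflection principle to get $\Pr[\tau_0>t]=O(a/\sqrt{t})$ directly. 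Both give the identical $O(a/\sqrt{t_p}) = O(n\tilde{x}_j/2^p)$ estimate; yours is the more direct route, the paper's is recycled from the machinery it has already set up for other purposes. Your ``large/small'' split of the indices is a mild elaboration of the paper's one-line bound $\E(u_i^p)\le \sum_j A_{ij}\,n\tilde{x}_j/2^p \le nB/(S\cdot 2^p)$, but it arrives at the same place. Two things you flag that the paper glosses over are worth noting as genuine improvements in rigor: (i) the absorbing boundary is at $\delta$, not $0$, and you point out why $\gamma\le\delta/\log n$ makes the overshoot and boundary-shift harmless; (ii) you justify why the per-coordinate indicator events remain mutually independent even as the active subspace shrinks (each coordinate's fixing time is measurable with respect to its own Gaussian history alone), which is the fact actually licensing the Chernoff step but is left implicit in the paper.
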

 First, we can
use Lemma \ref{boost-prob-absorb} to get a bound on the probability of 
absorption.
\begin{claim}
For $b \geq r a$,  after $O( \frac{ r^2 a^2}{\gamma^2} )$ Brownian motion 
steps, the probability of non-absorption at 0 is $\leq O(\frac{1}{r})$. 
\label{eq_var_bnd}
\end{claim}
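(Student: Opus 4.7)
The plan is to reduce the claim to a direct application of Lemma \ref{boost-prob-absorb}, with two small adjustments: (i) rescaling so that the Brownian step size $\gamma$ is absorbed into the time parameter, and (ii) substituting $b \geq ra$ into the absorption formula so that the $1 - b/(a+b)$ term contributes $O(1/r)$.

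First, I would reinterpret the 1-dimensional projection of the Brownian walk at step size $\gamma$ as a unit-step walk in rescaled coordinates. Under the rescaling $y = x/\gamma$, the starting position becomes $a/\gamma$, the far barrier sits at $(a+b)/\gamma$, and each iteration corresponds to a unit-variance increment. Lemma \ref{boost-prob-absorb} then says that after $k \cdot (a/\gamma)(b/\gamma) = k \cdot ab/\gamma^2$ iterations the probability of being absorbed at $0$ is at least $\frac{b}{a+b} - \frac{1}{k}$.

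Next, I would plug in $b \geq ra$. This gives
\[
\frac{b}{a+b} \;\geq\; \frac{ra}{a+ra} \;=\; \frac{r}{r+1} \;=\; 1 - \frac{1}{r+1},
\]
so the probability of \emph{non}-absorption at $0$ after $k \cdot ab/\gamma^2$ steps is at most $\frac{1}{r+1} + \frac{1}{k}$. Choosing $k = \Theta(r)$ balances the two terms and yields non-absorption probability $O(1/r)$, while the total number of steps becomes $k \cdot ab/\gamma^2 = O(r \cdot a \cdot ra/\gamma^2) = O(r^2 a^2/\gamma^2)$, as required.

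I do not expect any genuine obstacle here: the claim is essentially a specialized instantiation of the gambler's ruin bound already proved (and cited to the appendix) as Lemma \ref{boost-prob-absorb}. The only subtlety worth flagging in the write-up is making explicit that the Gaussian step used by the algorithm has variance $\gamma^2$ per coordinate, so that the time rescaling by $1/\gamma^2$ is legitimate; once that is noted, the rest is just arithmetic on the two inequalities $b \geq ra$ and $k \geq r$.
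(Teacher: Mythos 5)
Your proposal is correct and takes essentially the same route as the paper: both rescale the walk by $1/\gamma$ so the lemma applies with unit steps, then invoke Lemma~\ref{boost-prob-absorb} with $k = \Theta(r)$ and the worst-case barrier $b = ra$, bounding the non-absorption probability by $\frac{1}{r+1} + \frac{1}{k} = O(1/r)$ after $k\,a b/\gamma^2 = O(r^2 a^2/\gamma^2)$ steps. Your write-up is slightly more explicit than the paper's about the time rescaling and the monotone substitution $b \to ra$, but the underlying argument is identical.
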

The above claim is trivially true
for $r \leq 1$.
In Lemma \ref{boost-prob-absorb} , use $k = r $, $b = ra/\gamma$ 
that gives absorption probability $\leq O(1/r)$ after $r\cdot a/\gamma \cdot
a r/\gamma = \frac{ r^2 a^2 }{\gamma^2 }$ steps.   
\begin{proof}(of Lemma \ref{absorp_rate})
Choosing $T_p=\frac{2^p 2^p}{n^2 \gamma^2}=\frac{(\overset{\sim}{x_i})^2}{\gamma^2}(\frac{2^p}{n \overset{\sim}{x_i}})^2$,\\
we can apply Claim \ref{eq_var_bnd} with $r=\frac{2^p}{n \overset{\sim}{x_i}}$ to obtain that
the probability that $x_i$ is not absorbed at 0 is $\leq \frac{n \overset{\sim}{x_i}}{2^p}$.
Let $V_i^p\:x\leq B$ be the $i^{th}$ constraint after $p$ phases.
If $u^p_i$
(= $\twonormsq{ V_i^p }$) denotes the number of unabsorbed variables in 
constraint $i$ after $p$ phases then
$\E ( u^p_i ) 
\leq \sum_{j=1}^{n} \frac{n \overset{\sim}{x_{j}}}{2^p} A_{i,j} \leq 
\frac{n B}{S \cdot 2^p}$ since 
 $\sum_{j=1}^{n} A_{i,j}\overset{\sim }{x_{j}} \leq \frac{B}{S}$.
 
 Note that the variables are independently executing Brownian walks.
For $\frac{n B}{S \cdot 2^p} \geq  \log{(m)}$ or equivalently, 
$p \leq \log ({n})- \log (\log (m) )+\log (B)-\log (S)$,
 we can apply Chernoff bound to claim that $u_i^p$
is $O(\frac{n B}{S \cdot 2^p})$ with probability $\geq 1-\frac{1}{m }$ 
\end{proof}
\begin{cor}
Since $\E ( u^p_i ) =  \E ( \twonormsq{V_i^p} )$ for $V_i^p = {\{ 0, 1 \}}^n$,
we can bound $ \twonormsq{V_i^p} $ by $\frac{n B}{S \cdot 2^p}$ with high
probability for $p \leq \log ({n})- \log (\log (m \log(n)) )+\log (B)-\log (S)$.
\label{normbnd}
\end{cor}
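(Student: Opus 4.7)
The plan is very short because the corollary is essentially a repackaging of Lemma \ref{absorp_rate} using one structural observation about $V_i^p$.

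First I would exploit the fact that $V_i^p$ is a $0/1$ incidence vector, so $\twonormsq{V_i^p}$ just counts its nonzero coordinates. By the definition of $V_i^p$, those nonzero coordinates are exactly the unabsorbed variables appearing in constraint $i$ after phase $p$. Hence $\twonormsq{V_i^p} = u_i^p$ pointwise, not merely in expectation; the expectation identity stated in the corollary is really the weaker shadow of this deterministic equality, which is all we actually need.

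Next I would invoke Lemma \ref{absorp_rate} applied to constraint $i$, which already gives the desired $O(nB/(S \cdot 2^p))$ bound on $u_i^p$ with probability $\geq 1 - 1/m^{\Omega(1)}$ provided $nB/(S \cdot 2^p) \geq \log m$. Combining with the first step yields the desired bound on $\twonormsq{V_i^p}$ for any fixed phase $p$ in that range. The only subtlety is the slight tightening of the admissible range of $p$ from $\log n - \log\log m + \log B - \log S$ in Lemma \ref{absorp_rate} to $\log n - \log\log(m\log n) + \log B - \log S$ in the corollary. This costs essentially nothing in the Chernoff estimate: rerunning the same Chernoff calculation but requiring the mean $nB/(S\cdot 2^p)$ to exceed $\log(m\log n)$ rather than $\log m$ boosts the failure probability from $1/m$ to $1/(m\log n)$.

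The reason this stronger failure bound is desirable, and thus the only nontrivial aspect of the corollary, is that to get a high probability statement about $\twonormsq{V_i^p}$ we want to take a union bound simultaneously over all $m$ constraints and all $O(\log n)$ phases (there are at most $\log n$ values of $p$ in the admissible range, since $p$ is a phase index). The $1/(m \log n)$ per-event failure probability exactly survives this union bound to give an overall high-probability conclusion. I would end the proof by writing out this union bound explicitly, so that the statement $\twonormsq{V_i^p} \leq nB/(S\cdot 2^p)$ holds for every constraint and every $p$ in the stated range simultaneously.

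The main obstacle is nothing beyond bookkeeping: reconciling the probability budget used in Lemma \ref{absorp_rate} with the union bound that is tacit in a corollary claiming a bound ``with high probability'' over all constraints and phases. Once that bookkeeping is done, the corollary follows with no new probabilistic argument.
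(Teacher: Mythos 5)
Your proposal is correct and follows essentially the same route as the paper implicitly takes: the paper defines $u_i^p$ \emph{as} $\twonormsq{V_i^p}$ in the proof of Lemma \ref{absorp_rate} (it writes ``$u^p_i \;(=\twonormsq{V_i^p})$''), so the corollary is indeed a direct restatement of that lemma, and your observation that the equality holds pointwise rather than merely in expectation is the right way to read the slightly misleading ``$\E(u_i^p)=\E(\twonormsq{V_i^p})$'' phrasing. Your bookkeeping on the phase range (tightening the threshold from $\log m$ to $\log(m\log n)$ so the Chernoff tail survives a union bound over all $m$ constraints and $O(\log n)$ phases) is a sensible reconstruction of why the corollary's admissible range is marginally tighter than the lemma's, and it is consistent with the footnote accompanying Lemma \ref{absorp_rate}.
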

\textbf{Remark}
(i) For $A_{i,j} \in [0,1]$, the above proof on $\E ( u_i^p )$ 
doesn't hold directly but the bound on $ \twonormsq{V_i^p} $ is still valid.
\\
(ii) If we proceed upto $p* = \log ({n})-\log (\log (m  ))+
\log (B)-\log (S)-\log (c)$, all constraints have
at most $O(\log m )$ 
unfixed coordinates.
 
\subsection{Error Bound}

From the previous result, 
for all $j , \twonormsq{V_j^p} \leq  O(\frac{nB}{S2^p}$) 
with high probability.
To bound the error consider a fixed constraint $C_j:V_j.x \leq B$,
we denote the error accumulated in the $p^{th}$ phase by $\delta_p$, so
that $\sum_{q=1}^p \delta_q = \beta_p$.

\begin{lemma}
For all constraints $C_j$, the error 
 $\delta_p \leq
\sqrt{\frac{2 B \log m } {S\cdot n }} 2^{\frac{p}{2}} .$\\ 
Thus the total error $\beta_p$ upto $\log n - \log (\log m )+ 
\log (B)- \log (S)$ is bounded by $c' \frac{B}{S}$ for some constant $c'$. 
\label{err_bnd} 
\end{lemma}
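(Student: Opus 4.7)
The plan is to bound $\delta_p$ by viewing it as a sum of Gaussian martingale differences with a controlled quadratic variation and then to sum the per-phase bounds as a geometric series. First I would fix a constraint $C_j$ and a phase $p$, and write
\[
\delta_p \;=\; \sum_{t \in \text{phase } p} \langle V_j^t,\, X_{t+1}-X_t\rangle \;=\; \gamma \sum_{t \in \text{phase } p} \langle V_j^t,\, U_t\rangle .
\]
By the observation in Section \ref{sec:algo}, each term is Gaussian with mean zero and conditional variance $\gamma^2 \twonormsq{V_j^t}$, so the partial sums form a martingale whose increments are conditionally sub-Gaussian with variance proxy $\gamma^2 \twonormsq{V_j^t}$.

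Next I would use Corollary \ref{normbnd} to bound $\twonormsq{V_j^t} \leq nB/(S\cdot 2^p)$ throughout phase $p$, which holds by definition of phase $p$ since the squared norm is non-increasing and has already dropped to this level by the start of the phase. Combined with the phase length $T_{p+1}-T_p = \Theta(4^p/(n^2\gamma^2))$ coming from the definition of $T_p$ in Lemma \ref{absorp_rate}, the total conditional variance of $\delta_p$ is at most
\[
\gamma^2 \cdot \frac{nB}{S\cdot 2^p} \cdot O\!\left(\frac{4^p}{n^2\gamma^2}\right) \;=\; O\!\left(\frac{B\cdot 2^p}{Sn}\right).
\]
I would then rescale the increments by their worst-case standard deviation to obtain standard Gaussians and apply Lemma \ref{chernoff-mart}, choosing the deviation parameter so the failure probability is $1/m^{\Omega(1)}$. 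This yields $|\delta_p| \leq \sqrt{2B\log m /(Sn)} \cdot 2^{p/2}$ with the required high probability, and a union bound over all $m$ constraints and all $O(\log n)$ phases preserves this guarantee.

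Finally I would sum the per-phase bounds as a geometric series:
\[
\beta_{p} \;=\; \sum_{q=1}^{p} \delta_q \;\leq\; \sqrt{\tfrac{2B\log m}{Sn}} \sum_{q=1}^{p} 2^{q/2} \;=\; O\!\left(\sqrt{\tfrac{B\log m}{Sn}}\cdot 2^{p/2}\right).
\]
Plugging in the stated cutoff $p^{*} = \log n - \log\log m + \log B - \log S$, so that $2^{p^{*}/2} = \sqrt{nB/(S\log m)}$, collapses this expression to $O(B/S)$, giving the claimed constant $c'$.

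The main obstacle, in my view, is passing cleanly from the ``per-step Gaussian with a predictable random variance'' picture to a single clean concentration inequality: because $V_j^t$ itself changes each time a coordinate is absorbed, the conditional variance is predictable but random. One therefore has to either replace it by its worst-case bound within the phase (justified by Corollary \ref{normbnd}) before invoking an Azuma/Hoeffding-type tail bound for sub-Gaussian martingale differences, or apply a Freedman-type inequality directly on the quadratic variation. A second subtlety is that Corollary \ref{normbnd} and Lemma \ref{absorp_rate} are themselves high-probability statements, so the conditioning on their success events must be carried through the martingale concentration bound so that the union bound over constraints and phases still produces a $1/m^{\Omega(1)}$ failure probability overall.
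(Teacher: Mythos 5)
Your proposal follows essentially the same route as the paper's: decompose the per-phase change $\delta_p$ into Gaussian martingale increments $\gamma\langle V_j^t, U_t\rangle$, bound the per-step variance via Corollary~\ref{normbnd}, multiply by the phase length $T_p - T_{p-1}$ from Lemma~\ref{absorp_rate}, invoke Lemma~\ref{chernoff-mart} with deviation parameter $\Theta(\sqrt{\log m})$ to cover a union bound over constraints, and finally sum the resulting geometric series, noting that the cutoff phase makes it collapse to $O(B/S)$.

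One small bookkeeping slip is worth flagging: you assert $\twonormsq{V_j^t} \leq nB/(S\cdot 2^p)$ \emph{throughout} phase $p$ on the grounds that the norm has ``already dropped to this level by the start of the phase,'' but Corollary~\ref{normbnd} only guarantees this level after $T_p$ steps, i.e.\ at the \emph{end} of phase $p$. At the start of the phase (after $T_{p-1}$ steps) the correct bound is $\twonormsq{V_j^{p-1}} \leq nB/(S\cdot 2^{p-1})$, a factor of $2$ larger; this is what the paper uses. The factor of $2$ is absorbed by the hidden constants (and partially offset by your use of the true phase length $T_p-T_{p-1}$ where the paper loosely uses $T_p$), so the asymptotic conclusion $\delta_p = O\bigl(\sqrt{B\log m\cdot 2^p/(Sn)}\bigr)$ and hence $\beta_{p^*} = O(B/S)$ is unaffected, but the stated justification for the norm bound should be corrected. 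Your explicit treatment of the predictable but random conditional variance, and the need to condition on the success of Corollary~\ref{normbnd} across phases before applying the martingale tail bound, makes the argument slightly more careful than the paper's presentation, which takes these points as implicit.
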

\begin{proof}
Consider $\Pr(< \gamma (\sum_{i=T_{p-1}+1}^{T_{p}}U_i ),V_j>\vert \geq \delta_p )$\\ 
=$\Pr(\vert < \sum_{i=T_{p-1}+1} ^{T_p} U_i, \frac{V_j}{\vert \vert V_j \vert \vert}> \vert \geq \frac{\delta_p}{\gamma \vert \vert V_j \vert \vert}).$\\

Now $<U_i,\frac{V_j}{\vert \vert V_j \vert \vert}> 
\sim \mathcal{N}(0, \sigma^2)$ where $\sigma^2 \leq 1$. From Lemma 
\ref{chernoff-mart}, it follows that
the above is bounded by $\exp (-\frac{(\delta_p)^2}
{\gamma^2 \vert \vert V_j \vert \vert ^2 2 (T_p-T_{p-1})})$.
It will hold simultaneously for all the $m$ constraints in the $p^{th}$ 
phase, if $\delta_p$ satisfies :-

$\frac{(\delta_p)^2}{\gamma^2 \vert \vert V_j \vert \vert ^2 
2 (T_p-T_{p-1})} \geq \Omega( \log m )$\\
i.e. if $\delta_p \geq \gamma \vert \vert V_j \vert \vert \sqrt{2 \cdot
\log  m .(T_p-T_{p-1})}$\\
From Corollary \ref{normbnd} $\twonormsq{V^{p-1}_j } \leq \frac{nB}{S\cdot 2^{p-1}}$ and $T_p=\frac{2^p 2^p}{n^2 \gamma^2}$,\\ 
it suffices to choose 
 \[ \delta_p \geq \gamma \sqrt{\frac{nB}{S\cdot 2^{p-1}}}\cdot 
\sqrt{2 \log m }\cdot \sqrt{\frac{2^p\cdot 2^p}{n^2 \gamma^2}}
 =2 \sqrt{\frac{ B \log m  2^p}{S\cdot n}}
\]
The above bound for $\delta_p$ holds with high probability when $p \leq \log n -
\log (\log m )+ \log (B)- \log (S)$ since only in this situation 
we can bound effective value of $V_j$.\\
Thus the total error upto $\log n - \log (\log m )+ \log (B)- \log (S)$ is 
bounded by 
\begin{eqnarray}
\sum_{p=1}^{\log n -\log (\log m )+\log (B)-\log (S)} 2\sqrt{\frac{ B\log m }
{S\cdot n}} 2^{\frac{p}{2}} &
 \leq & 2\sqrt{\frac{ B\log m }{S\cdot n}} \cdot O(2^{\frac{\log n - 
\log (\log m )+\ log B - \log (S)}{2}}) \\
 & = & 2\sqrt{\frac{ B \log m }{S\cdot n}} \cdot O(\sqrt{\frac{nB}
{S\cdot \log m }})=O(\frac{B}{S})
\end{eqnarray}
Therefore, the total error in every constraint is bounded by
 $<V_j,x^p>=<V_j,\overset{\sim}{x}+\gamma\sum_{i=1} ^{T_p} U_i>=
O(\frac{B}{S})$.
\end{proof}
Thus the solution obtained satisfies the constraints 
$A \cdot x \leq \frac{B}{S}$.

 The above method can be extended to obtain an alternate proof the 
$O(\log m/\log\log m)$ error bound of Raghavan-Thompson that we
have ommitted from this version. 

Based on the above results, we summarize as follows.
\begin{cor}
In the first stage of Algorithm {\bf Iterative Randomized Rounding}, we 
run the algorithm for
$T = \frac{B^2}{S^2 \log^2 m \gamma^2}$ steps. Then with high 
probability\\ 
(i) all constraints have $ \leq \log m$ unfixed variables and
\\
(ii) the total error in any constraint is bounded by $c'B/S$ for some
constant $c'$. 
\label{browniansteps}
\end{cor}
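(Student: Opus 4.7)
The plan is to obtain the corollary as a direct bookkeeping consequence of Lemmas \ref{absorp_rate} and \ref{err_bnd}, by identifying the correct stopping phase and carrying the probability bounds through a union bound. Since both lemmas are parametrized by a phase index $p$ with $T_p = \frac{2^{2p}}{n^2 \gamma^2}$ steps, I only need to pick the value of $p$ at which the guaranteed upper bound $O(nB/(S \cdot 2^p))$ on the number of unfixed variables in each constraint first drops to $\log m$.

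First I would set $p^\star := \log n - \log\log m + \log B - \log S$, which is the largest value of $p$ allowed in the hypothesis of Lemma \ref{absorp_rate}. Substituting, $2^{p^\star} = \frac{nB}{S \log m}$, so that the bound $\frac{nB}{S \cdot 2^{p^\star}}$ on the number of unfixed variables per constraint becomes exactly $\log m$, giving part (i). Plugging $2^{p^\star}$ into the formula for $T_{p^\star}$ yields
\[
T_{p^\star} \;=\; \frac{(2^{p^\star})^2}{n^2 \gamma^2} \;=\; \frac{1}{n^2 \gamma^2}\cdot\frac{n^2 B^2}{S^2 \log^2 m} \;=\; \frac{B^2}{S^2 \log^2 m \cdot \gamma^2},
\]
which matches the value of $T$ in the statement. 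So running the first stage for $T = T_{p^\star}$ steps is precisely running it up to the end of phase $p^\star$.

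Next I would invoke Lemma \ref{err_bnd}, whose conclusion is stated exactly up to the same cutoff $p^\star$: the accumulated error $\beta_{p^\star} = \sum_{q \le p^\star} \delta_q$ is bounded by $c' B/S$ for some constant $c'$, giving part (ii). Both lemmas hold with probability at least $1 - 1/m^{\Omega(1)}$ per phase per constraint, so I would finish with a union bound over the $m$ constraints and the $O(\log n)$ phases $p = 1, 2, \ldots, p^\star$; since $\log n$ is absorbed into the polynomial slack in the $m^{\Omega(1)}$ exponent, both events hold simultaneously with high probability.

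There is essentially no real obstacle here beyond arithmetic: the substantive work has already been done in Lemmas \ref{absorp_rate} and \ref{err_bnd}. The only thing to be careful about is ensuring the chosen $p^\star$ simultaneously (a) satisfies the hypothesis $p \le \log n - \log\log m + \log B - \log S$ required for Lemma \ref{absorp_rate} to apply (it does, with equality), and (b) is large enough that $\frac{nB}{S\cdot 2^{p^\star}} \le \log m$ holds (again, with equality). Both constraints pin down the same $p^\star$, which is what makes the clean formula for $T$ drop out.
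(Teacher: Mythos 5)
Your proposal is correct and matches the paper's intended derivation: the paper states this corollary without an explicit proof, citing only "the above results," and your identification of $p^\star = \log n - \log\log m + \log B - \log S$ together with the substitution $T_{p^\star} = B^2/(S^2\log^2 m\,\gamma^2)$ and the appeal to Lemmas \ref{absorp_rate} and \ref{err_bnd} plus a union bound is exactly the bookkeeping the paper has in mind. The only nit is that Lemma \ref{absorp_rate} gives $O(nB/(S\cdot 2^p))$ rather than exactly $nB/(S\cdot 2^p)$, so strictly speaking one should stop at $p^\star - \log c$ for the hidden constant $c$ (as the paper's Remark (ii) after Corollary \ref{normbnd} notes); this shifts $T$ by a constant factor but does not change the conclusion.
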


Using {\it multiple copies} of a variable, results similar
to this section have been used before that are {\it ad hoc} to specific 
applications \cite{CC:09,AS:96}.
However, it requires two stages of rounding - once choosing exactly one
copy followed by a phase of independent rounding. In comparison, our
technique and analysis are more general. 
 

\section{Applications using LLL}

\label{short}

At the end of brownian walk(that is after $\frac{B^2}{S^2\log^2m \gamma^2}$ steps), we have atmost $\log(m)$ unconverged variables 
occurring(with non zero coefficients) in each equation.
Now we need to bound the error in independantly rounding the unconverged 
Now we need to bound the error in independantly rounding the unconverged 
variables.
As per the notations setup in section \ref{sec:algo}, $V_i^{p*}$ represents the 
coefficient vector of $i^{th}$ constraint after p* phases.
\\

Define $\overset{\wedge}{A}$ as a matrix having rows as $V_i^{p*}$.

Now if the unconverged variables are rounded from $\overset{\sim}{x}$ to $\overset{\wedge}{x}$,
the change in the value of RHS is 
$A\cdot(\overset{\wedge}{x}-\overset{\sim}{x})=\overset{\wedge}{A} \cdot (\overset{\wedge}{x}-\overset{\sim}{x})$.
(Since only unconverged variables change).

Hence to calculate additional error due to independant rounding we only need to consider the "`sparsified"' matrix $\overset{\wedge}{A}$.
\\
Before we prove the main results, we consider the simpler 
case of columns with bounded number of ones -
Suppose the matrix $A^{m\times n}$ has no more than
$\rho$ 1's in any column. Let $OPT$ be the optimal fractional 
objective value for the weighted objective function $\sum_i c_i \cdot x_i$. 
Consider a fixed constraint $C_r$, that contains $\log m$ 1's after the Brownian
motion and let $j_1 (r) , j_2 (r) \ldots $ denote the (at most) $\log n$
columns that contain 1.
We say that a constraint $C_y$ is {\it dependent } on $C_r$ if they
share at least one column where the value is 1.
So, the dependency of any single constraint can be bound by $\rho\log m$
(Since there are $\log m$ 1's per row in the sparsified matrix $\overset{\wedge}{A}$).
If we use the independent rounding to round the 
fractional solution $x'$, the probability that the value of a constraint
exceeds $t > 1$ is bounded $\frac{1}{2^t}$ from Chernoff
bounds \footnote{This
is a slightly weaker version to keep the expression simple}. Let
$E_i$ denote the event that $C_i$ exceeds $t$ when we use randomized
rounding.  We are interested to know the
probability of the event $\bigcap_{1 \leq i \leq m} \bar{E_i}$ 
since this implies the event that all the inequalities are less than $t$.
This is tailor-made for Lovasz Local Lemma (LLL).
We also want to guarantee a large
value of the objective function. For example, setting all variables equal to
zero would guarantee feasibility but also return an objective function value 0.
Thus we define an additional event $A_{m+1}$ corresponding to the objective
function value less than $(1- \epsilon ) \cdot OPT$ for some suitable $1 >
\epsilon > 0$.
Since $A_{m+1}$ is a function of all the variables, it has dependencies with
all other $A_i \ \ i = 1 \ldots m$, therefore we have to use the generalized
version of LLL in this case. 
\begin{theorem}[Lovasz Local Lemma \cite{EL:75}]
Let $A_i , 1 \leq i \leq N$ be events such that $\Pr [ A_i ] = p$ and each
event is dependent on at most $d$ other events. Then if $ep(d+1) < 1$, then\\
$ \Pr ( \bar{A_1} \cap \bar{ A_2 } \ldots \bar{ A_N } ) > 0 $.
\\
Alternately, in a more general (asymmetric) case, where the dependencies are
described by a graph $(\{ 1, 2 \ldots N \}, E)$
where an edge between $i, j$ denotes dependency between
$A_i , A_j$ and $y_i$ are real numbers such that $\Pr ( A_i ) \leq
y_i \cdot \prod_{(i,j) \in E} ( 1 - y_j )$ then
$ \Pr \left( \bigcap_{i=1}^{N} \bar{ A_i } \right) \geq \prod_{i=1}^{N}
( 1 - y_j )$.\\
Moreover, such an event can be computed in randomized polynomial time using
an algorithm of Moser and Tardos \cite{MT:10}.
\label{lll}
\end{theorem}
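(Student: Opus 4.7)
The plan is to handle the asymmetric version first, since the symmetric statement $ep(d+1) < 1$ drops out by taking $y_i = 1/(d+1)$ and verifying $\frac{1}{d+1}\bigl(1 - \frac{1}{d+1}\bigr)^d \geq p$ via $(1-1/(d+1))^d \geq 1/e$. The core of the asymmetric proof is a single inductive claim: for every subset $S \subseteq [N]$ and every $i \notin S$,
$$\Pr\!\left[A_i \mid \bigcap_{j \in S} \bar{A_j}\right] \leq y_i.$$
Once this is in hand, the product bound $\Pr[\bigcap_i \bar{A_i}] \geq \prod_i (1 - y_i)$ falls out immediately from the chain rule $\Pr[\bigcap_i \bar{A_i}] = \prod_i \Pr[\bar{A_i} \mid \bigcap_{j<i} \bar{A_j}]$, and strict positivity (hence existence) follows at once.

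I would prove the inductive claim by induction on $|S|$. The base case $S = \emptyset$ is immediate from the hypothesis $\Pr[A_i] \leq y_i \prod_{(i,j) \in E}(1-y_j) \leq y_i$. For the inductive step, partition $S = S_1 \cup S_2$, where $S_1 = S \cap \Gamma(i)$ consists of the dependency-graph neighbors of $i$ lying in $S$ and $S_2 = S \setminus S_1$. Expand
$$\Pr\!\left[A_i \mid \bigcap_{j \in S} \bar{A_j}\right] = \frac{\Pr\!\left[A_i \cap \bigcap_{j \in S_1} \bar{A_j} \mid \bigcap_{j \in S_2} \bar{A_j}\right]}{\Pr\!\left[\bigcap_{j \in S_1} \bar{A_j} \mid \bigcap_{j \in S_2} \bar{A_j}\right]}.$$
The numerator is at most $\Pr[A_i \mid \bigcap_{S_2} \bar{A_j}] = \Pr[A_i] \leq y_i \prod_{(i,j)\in E}(1-y_j)$, using mutual independence of $A_i$ from events indexed outside $\Gamma(i)$. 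The denominator is expanded by the chain rule into a product over $j \in S_1$ of conditional probabilities, each of which is $\geq 1 - y_j$ by applying the inductive hypothesis to a strictly smaller conditioning set. The denominator is thus $\geq \prod_{j \in S_1}(1-y_j) \geq \prod_{(i,j)\in E}(1-y_j)$, and the ratio is $\leq y_i$ as required.

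The main obstacle will be the careful handling of independence: the dependency-graph hypothesis must be read as $A_i$ being jointly (not merely pairwise) independent of any event in the $\sigma$-algebra generated by $\{A_j : j \notin \Gamma(i) \cup \{i\}\}$, and this is exactly what licenses dropping the conditioning on $\bigcap_{S_2} \bar{A_j}$ in the numerator. One also needs to confirm the induction is well-founded on $|S|$ (not on $i$), since the conditioning sets shrink rather than being enumerated in a fixed order.

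For the constructive conclusion, I would invoke the Moser--Tardos resampling algorithm: sample all underlying variables once, and while some bad event $A_i$ holds, pick one and resample precisely the variables on which it depends. The analysis assigns to each resampling step a \emph{witness tree} of preceding resamplings, exhibits an injection from execution prefixes into such trees, and bounds the expected number of steps by a generating-function / entropy-compression argument that uses exactly the LLL slack $\Pr[A_i] \leq y_i \prod_{(i,j) \in E}(1-y_j)$ to make the resulting sum finite and polynomial in $N$, giving the randomized polynomial-time guarantee.
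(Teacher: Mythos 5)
The paper states this as a cited result (Erd\H{o}s--Lov\'asz for the existence form, Moser--Tardos for the algorithmic form) and gives no proof of its own, so there is no in-paper argument to compare against. Your proof is the standard textbook one: the inductive claim $\Pr[A_i \mid \bigcap_{j\in S}\bar{A_j}] \le y_i$ proved by strong induction on $|S|$ with the $S_1/S_2$ split into dependent and independent conditioning events, followed by the chain rule for the product bound, and the symmetric form recovered by setting $y_i = 1/(d+1)$ and using $(1-1/(d+1))^d \ge 1/e$. All of this is correct. One small caveat worth flagging: the Moser--Tardos algorithmic conclusion requires the \emph{variable model} of the LLL (events determined by a finite collection of mutually independent random variables, with the dependency graph induced by shared variables), not merely the abstract dependency-graph hypothesis as stated; your sketch implicitly assumes this when it speaks of "sample all underlying variables" and "resample precisely the variables on which it depends," and in the paper's application the events are indeed of this form, so the assumption is harmless but should be made explicit. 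Also note the paper's statement has a typo ($\prod_{i=1}^N (1-y_j)$ should read $\prod_{i=1}^N(1-y_i)$), which your write-up silently corrects.
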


If we choose $t$ such that $e \cdot 2^{-t} \cdot (d+1) \leq 1$ or
equivalently $t = \log d$, then we can apply
the previous theorem to obtain a rounding that satisfies an error bound of
$O( \log \rho + \log\log m )$. 
The error $t$ can be improved to $O(\frac{\log d}{\log\log d})$
by using a tighter version of the Chernoff bound (Equation \ref{chernoff}).


We define $A_{m+1}$
as the event where the objective value is less than $(1- \epsilon ) OPT$. From
Chernoff-Hoeffding bounds, we know that $\Pr ( A_{m+1} ) \leq 
\exp ( - \epsilon^2  OPT/2 )$.
We define $y_i$ for $i = 1, 2 \ldots m$ as before, corresponding
to the probability of exceeding $t = \log d/\log\log d$.
\\ By choosing
$y_i = 1/(\alpha d) \ i \leq m$ and $y_{m+1} = \frac{1}{2}$, 
for some suitable scaling factor $\alpha \geq e$, we
must satisfy the following inequalities
\begin{eqnarray}
\Pr (A_i ) & \leq & 1/(\alpha d) {( 1 - 1/(\alpha d) )}^{d} \cdot \frac{1}{2} \ \  i = 1, 2 \ldots m \\
\Pr ( A_{m+1} ) & \leq & \frac{1}{2} 
{( 1 - 1/(\alpha d) )}^{m} \leq
\frac{1}{2}\cdot \exp (- m/(\alpha d))
\label{eqnobj}
\end{eqnarray}
The first condition is easily satisfied when $\Pr ( A_i ) = \frac{1}{\Omega 
(\alpha d)}$.\\
To satisfy the second inequality, we can choose $\alpha$
so that  $OPT \geq \Omega ( \frac{m}{\alpha d})$. 
This implies that $\exp (- m/(d\alpha ) \geq
\exp (- \epsilon^2 OPT/2))$,  so condition (\ref{eqnobj}) is
satisfied for LLL to be applicable. \\


\begin{figure}
\fbox{\parbox{6.0in}{

{\footnotesize 
\begin{center}
Algorithm {\bf LLL based Iterative Randomized Rounding }
\end{center}
Input : $x'_i \ \ 1 \leq i \leq n, t$ (error parameter)
\\
Output : $\hat{x_i} \in \{0, 1\}$ \\

Do independent rounding on all the variables having values $x'_i$ to
$\hat{x_i}$.
\\
Compute the value of each constraint $C_i$ as $< V_i , \hat{x} >$

{\bf While} any inequality exceeds $t$ {\bf or} objective value is $< OPT/2$
\begin{quote}
\begin{enumerate}
\item Pick an arbitrary constraint $C_j$ that exceeds $t$
and perform independent rounding on all the variables in $V_j$.
\item Update the value of the constraints whose variables have changed
\end{enumerate}
\end{quote}
Return the rounded vector $\hat{x}$.
}}
}
\caption{An iterative randomized rounding algorithm based on Moser-Tardos}
\label{algo2}
\end{figure}

We summarize our discussion as follows
\begin{lemma}

For $A \in {\{0,1 \}}^{m \times n}$ with a maximum of 
$\rho$ 1's in each column, we can round 
the optimum solution $x^{*}$ of the linear program
$\max_{x} \sum_i c_i x_i \ \ s.t. A x \leq 1 \ \ , 0 \leq 
x_i \leq 1 $ 
to $\hat{x} \in {\{ 0, 1 \}}^n$ such that 
\[
{|| A \hat{x}||}_\infty \leq max(\left( \frac{\log \rho +  \log\log m}{\log\log 
(\rho \log m)}\right),\frac{\log(\frac{m}{OPT})}{\log(\log(\frac{m}{OPT}))}) \ \ \text{ and } \sum_i c_i \hat{x}_i \geq 
(1 - \epsilon ) OPT   
\label{bnd_col}
\]
\end{lemma}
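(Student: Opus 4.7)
The plan is to compose the two rounding stages already established in the paper. First I would invoke Corollary \ref{browniansteps} to run the Brownian walk on the LP optimum $x^*$ for $T = \Theta(1/(\log^2 m\, \gamma^2))$ steps (taking $S=1$, $B=1$); with high probability this produces intermediate values $\tilde x$ where every constraint has at most $\log m$ unfixed variables and at most a constant additive error in the constraint values. By the martingale property of the walk, $\E[\tilde x_i] = x^*_i$, so the expected objective is preserved. The remaining task is to round the unfixed variables using the Moser--Tardos procedure (Figure \ref{algo2}) on the sparsified matrix $\widehat A$.

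Next I would set up the bad events for LLL. Define $A_i$ for $1 \le i \le m$ as the event that the rounded $i$-th constraint value exceeds the target threshold $t$, and $A_{m+1}$ as the event that $\sum_i c_i \hat x_i < (1-\epsilon)\,OPT$. The key combinatorial observation is that since $\widehat A$ has at most $\log m$ ones per row and $A$ has at most $\rho$ ones per column, each constraint $A_i$ shares variables with at most $d := \rho \log m$ other constraints; $A_{m+1}$ is dependent on all $A_i$'s, which is handled asymmetrically as in the preceding discussion by choosing $y_i = 1/(\alpha d)$ for $i\le m$ and $y_{m+1}=1/2$. Using the tight Chernoff tail $\Pr(A_i) \le \exp(-\Omega(t \log t))$ for constraints exceeding expectation by a factor $t$, the asymmetric LLL condition $\Pr(A_i) \le y_i\prod_{(i,j)\in E}(1-y_j)$ reduces to requiring $t \log t = \Omega(\log(\alpha d))$, which is satisfied by
\[
t \;=\; O\!\left(\tfrac{\log d}{\log\log d}\right) \;=\; O\!\left(\tfrac{\log\rho + \log\log m}{\log\log(\rho\log m)}\right).
\]

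For the objective event, standard Chernoff--Hoeffding gives $\Pr(A_{m+1}) \le \exp(-\Omega(\epsilon^2 OPT))$, which must be dominated by $\tfrac12 \exp(-m/(\alpha d))$. When $OPT = \Omega(m/(\alpha d))$ we can absorb this into the same threshold $t$ and the first term of the max governs. Otherwise (the regime of small $OPT$), I would instead tune $\alpha$ so that the LLL condition for $A_{m+1}$ forces $t$ to satisfy $t\log t = \Omega(\log(m/OPT))$, yielding the second term $\log(m/OPT)/\log\log(m/OPT)$ of the max. The final rounding error threshold is the worse of the two, matching the claim, and Moser--Tardos produces such a rounded $\hat x$ in randomized polynomial time.

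The main obstacle I anticipate is reconciling the two LLL conditions cleanly: the objective event $A_{m+1}$ has full dependency on all constraints, so the product $(1 - 1/(\alpha d))^m$ degrades rapidly when $m \gg \alpha d$. Handling the small-$OPT$ regime requires choosing the threshold $t$ to simultaneously satisfy the constraint events (depending only on local neighborhoods of size $d$) and the global objective event, and it is precisely this tension that produces the second branch of the maximum in the lemma's bound. A minor additional subtlety is that the constant additive error incurred in the Brownian stage must be absorbed into $t$, which is fine as long as $t \ge \Omega(1)$.
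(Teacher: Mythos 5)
Your proposal is correct and follows essentially the same route as the paper: Brownian-walk sparsification to $\log m$ nonzeros per row (via Corollary~\ref{browniansteps}), the dependency bound $d = \rho\log m$, asymmetric LLL with $y_i = 1/(\alpha d)$ and $y_{m+1} = 1/2$, and the observation that tuning $\alpha$ so that $\alpha d \approx m/OPT$ produces the second branch of the maximum. The only cosmetic slip is writing $t = O(\log d/\log\log d)$ after deriving the requirement $t\log t = \Omega(\log(\alpha d))$ — it should read $t = O(\log(\alpha d)/\log\log(\alpha d))$ — but your subsequent two-regime discussion makes clear you intended exactly the paper's split.
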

\begin{proof} 
The error bound follows from the previous discussion by setting $y_i = \Omega (
\frac{1}{\alpha d})$ and using the stronger form of Chernoff bound in
Equation \ref{chernoff}. 
Note that Equation \ref{eqnobj} can be satisfied by ensuring
$OPT \geq \frac{m}{\alpha d}$ by choosing an appropriately large $\alpha$.\\
The above is satisfied for $\alpha \geq \frac{m}{OPT.d}$ i.e. if  $\alpha.d=\frac{m}{OPT}$.\\
In this case the discrepancy is $\frac{\log (\frac{m}{OPT})}{\log(\log(\frac{m}{OPT}))}$\\
Hence the result follows.
\end{proof}

The objective function still follows the martingale property since the
variables starting the random walk at $x'_i $ have probability
$ x'_i $ of being absorbed at 1 which is identical to  the independent
rounding that we use in the second phase
for Moser-Tardos algorithm.
Although, the random walk is short-cut by a single step independent
rounding, the distribution for absorption at 0/1 remains unchanged. To
see this, consider the {\it last} time a variable $x_i$ is rounded by the
Moser-Tardos algorithm - the probability is $x'_i$ since it is done
independently every time.

\ignore{
The rest follow from LLL and the algorithm of Moser-Tardos. The expected
running time for the second phase is $O( \frac{ m^2 }{n} polylog (n))$ in
our case.\\
That the error bound may fail if either the
Brownian walk and the Moser-Tardos algorithm fail
with inverse polynomial probability because of the inherent randomization.
This probability can be boosted by repetition.
\end{proof}
As an example, consider $\rho = \log m$, $OPT \geq \frac{m}{\rho 
\log^{c+1} m }$ then for $\alpha = \log^{c} m$,
the error is $O(\log\log m/\log\log\log m )$.
\\
{\it Remark} A direct application of LLL (without running the Brownian
walk) would increase the dependence to $O(\rho \cdot n)$ resulting in an
error bound of $O( \frac{\log \rho + \log m}{\log\log m })$ that is not 
better than the RT bound. 
}
\subsection{Random matrices}
\label{sec5}

Let ${\cal A}^{m\times n}_k$ denote the family of 
$m \times n$ 0-1 matrix with exactly $k$ 1s in each of
the $m$ rows chosen uniformly at random. 
Clearly $x'_i = 1/k$ is a feasible solution with objective
function value $n/k$. After rounding $x'$ to $\hat{x}$, 
we want to achieve an objective value
$\Omega ( n/k )$ 1s in the rounded vector. 
In addition, $A \cdot x' \leq t\cdot e^m$ for error guarantee
$t$. 

To compute the dependency $d$ for $C_r$, we observe that 
another constraint $C_i$ will contain a 1 in $j_1 (r)$ with 
probability $\frac{k}{n}$, 
i.e., if it had one of the $k$ randomly chosen 1's in that column, which
is $\frac{k}{n}$. Since all the rows were chosen independently, the expected
number of rows among $m$ rows that have a 1 in column $j_1 (r)$ can
be bounded by $\frac{m k}{n}$ and by $O(\max\{ \frac{m k }{n} , \log m \} )$ 
with probability greater than $1 - 1/m$. Since this holds for all the positions,
$j_i (r)$, by the union bound, and 
using $\max \{ a , b \} \leq (a+b)$, we can claim the following
\begin{claim}
The total number of constraints that are 
correlated to $C_r$ can be
bound by $O(\frac{m k \log m}{n} +  \log^2 m )$ with high 
probability.
\label{deg_bnd}
\end{claim}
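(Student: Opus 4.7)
The plan is to exploit the fact that after Brownian-walk sparsification each row of $\hat A$ has at most $\log m$ ones, and then use concentration of the column sums of $A$ under the random structure of $\mathcal{A}^{m\times n}_k$. First, fix a row $C_r$ and let $J_r \subseteq \{1,\ldots,n\}$ be the set of (at most $\log m$) columns where $C_r$ carries a $1$ in $\hat A$. Since $\hat A$ is obtained by zeroing out entries of $A$, any constraint dependent on $C_r$ must share with $C_r$ a $1$ of $A$ in some column of $J_r$. So the dependency count is at most $\sum_{c \in J_r} X_c$, where $X_c$ denotes the number of rows of $A$ (other than $C_r$) that contain a $1$ in column $c$. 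By the definition of $\mathcal{A}^{m\times n}_k$ (using the independent footnote description of the model), each entry $A_{i,c}$ is $1$ with probability $k/n$ independently across rows, so $X_c \sim \mathrm{Bin}(m-1, k/n)$ with $\E[X_c] = (m-1)k/n$.

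Next, I would control each $X_c$ by a Chernoff tail bound, splitting into two regimes. If $mk/n \ge \log m$, a standard multiplicative Chernoff bound with constant relative deviation gives $X_c = O(mk/n)$ with probability at least $1 - 1/m^{3}$. If $mk/n < \log m$, the form $\Pr[\mathrm{Bin}(m, k/n) \ge c\log m] \le (e\,mk/(c n\log m))^{c\log m}$ for a sufficiently large constant $c$ yields $X_c = O(\log m)$ with the same high probability. Combining, $X_c \le C\max(mk/n,\log m)$ for an absolute constant $C$ with probability at least $1 - 1/m^{3}$. A union bound over the $n \le \mathrm{poly}(m)$ columns makes this hold simultaneously for every $c$ with probability at least $1 - 1/m^{2}$.

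On this good event, for the fixed $C_r$,
\[
\sum_{c \in J_r} X_c \;\le\; |J_r|\cdot C\max\!\Bigl(\tfrac{mk}{n},\log m\Bigr) \;\le\; \log m \cdot C\max\!\Bigl(\tfrac{mk}{n},\log m\Bigr) \;=\; O\!\Bigl(\tfrac{mk\log m}{n} + \log^{2} m\Bigr),
\]
and a final union bound over the $m$ choices of $r$ yields the claim with high probability over the random draw of $A$. The one conceptual point to double-check is that $J_r$ is itself determined by $C_r$ together with the Brownian-walk trajectory, and so is not independent of $A$; however, the per-column bound on $X_c$ is proved \emph{uniformly over every $c\in\{1,\ldots,n\}$} before revealing which (at most) $\log m$ columns constitute $J_r$, so any particular subset can be substituted in without harm. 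This small measurability point is the main subtlety; everything else reduces to routine Chernoff calculation and union bounds.
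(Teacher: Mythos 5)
Your proof matches the paper's argument: fix $C_r$, bound the number of other rows sharing a 1 in each of the (at most $\log m$) columns of $J_r$ by $O(\max(mk/n,\log m))$ via Chernoff, sum over $J_r$, and finish with $\max(a,b)\le a+b$ and a union bound. The measurability point you flag — establishing the per-column tail bound uniformly over all $n$ columns before revealing which ones constitute $J_r$ — is a genuine clarification of a step the paper glosses over (it unions directly over the Brownian-walk-dependent positions $j_i(r)$), but the route is essentially identical.
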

In our context, for $k = \log m$, $d = O( \frac{m\log^2 m}{n} + 
\log m ^2)$. 
We can verify this by 
exhaustively computing the dependence - if it exceeds this then our algorithm
is deemed to have failed which is bounded by inverse polynomial probability.
If we choose $t$ such that $e \cdot 2^{-t} \cdot (d+1) \leq 1$ or
equivalently $t = \Omega ( \log (\frac{m\log^2 m}{n}+ \log^2 m  ) ) $, 
then we can apply
the previous theorem to obtain a rounding that satisfies an error bound of
$O( \log (\frac{m\log^2 m}{n}+ \log^2 m ) )$. \\
For $m$ bounded by 
$n \log^{O(1)} m$ this is
$O(\log\log m)$ which is substantially better than the Raghavan-Thompson
bound.  \\However for $m \geq n^{1 + \epsilon}$ for any constant $\epsilon > 0$,
it is no better.
\begin{proof} (Completing the Proof of Theorem \ref{mainthm0-1})
Now we formally define our bad events $E_i$ here.\\
$\forall 1\leq i \leq m$, define $E_i\equiv(A_i^T x > \delta )$.(where we choose error to be $\delta$).
Define $E_{m+1}\equiv(c^T.x <(1-\epsilon)OPT)$.
We have $Pr(E_i)\leq \frac{e^{\delta}}{{1+\delta}^{1+\delta}}$.
and $Pr(E_{m+1})<e^{\frac{-\epsilon^2.OPT}{2}}$.
We choose $y_i=\frac{1}{\alpha.d}$,for some $\alpha\geq1$.
and choose $y_{m+1}=\frac{1}{2}$.
To apply LLL we need,
$Pr(E_i)<\frac{1}{\alpha.d}(1-\frac{1}{\alpha.d})^d.(1-\frac{1}{2})$
and $Pr(E_{m+1})<\frac{1}{2}(1-\frac{1}{\alpha.d})^m$.
To satisfy these equations we get error,$\delta=\frac{\log(\alpha.d)}{\log(\log(\alpha.d))}$.\\
Also we get $e^{\frac{-\epsilon^2.OPT}{2}}<e^{\frac{-m}{\alpha.d}}$.
The second equation is satisfied for $OPT>\frac{m}{\alpha.d}$ or $\alpha.d >\frac{m}{OPT}$.
Combining the 2 results we have error $\delta$ in constraints as $max\left\{\frac{\log(d)}{\log(\log(d))},\frac{\log(\frac{m}{OPT})}{\log(\log(\frac{m}{OPT}))}\right\}$\\which is $max\left\{\left( \frac{\log (\frac{mk\log(m)}{n}) +  \log\log m}{\log\log 
(\frac{mk \log(m)}{n} \log m)}\right),\frac{\log(\frac{m}{OPT})}{\log(\log(\frac{m}{OPT}))}\right\}$.
When $c_i \geq \frac{1}{p}$ we have $OPT \geq \frac{n}{k.p}$.
Substituting the value of OPT in the equation and using the fact that $\frac{mk}{n}=O(\log(m))$, we prove Theorem \ref{mainthm0-1}. 
\ignore{
\color{red}

 Then, using 
$d \leq \frac{mk\log m}{n}$ (Claim \ref{deg_bnd}), $\exp (- m/(d\alpha ) \leq
\exp (- n/(\alpha k\log n))$ so the condition \ref{eqnobj} is 
satisfied for LLL to be applicable. In general, Equation \ref{eqnobj} can be
satisfied when the objective function value $OPT$ 
exceeds $\frac{n}{\alpha k\log n}$ for choosing a suitable $\alpha \geq e$. 

\subsection{Putting it together}
For a random $m \times n$ matrix $A \in {\cal A}^{m \times n}_k$ 
for $k \geq \log n$ 
in $A^k$, we first apply the Brownian
walk based algorithm to sparsify it to $\log m$ unfixed variables
per constraint. 

For weighted objective function, observe that Equation \ref{eqnobj} 
can be satisfied for value of the objective function at least $\Omega 
(n/(k\log n)$. This can be ensured by the condition $c_i \geq 1/\log n$.
More generally, for $c_i \geq 1/(\alpha \log n)$, 
we can use a scale factor $\alpha \geq 1$ to obtain
an error bound of $O(\log (d \alpha) /\log\log (d \alpha))$ with
objective value $\Omega ( n/(k\alpha \log n))$.\\ 
The rest follow from LLL and the algorithm of Moser-Tardos.} The expected 
running time for the second phase is $O( \frac{ m^2 }{n} polylog (n))$ in
our case, due to application of Moser Tardos.\\ 
Here the error bound may fail for the additional reason that
the dependence may exceed $\frac{m k polylog (n)}{n}$ after the Brownian 
motion phase. This can happen with probability $\frac{1}{m^{\Omega (1) }}$. 
This part is related the input distribution of ${\cal A}^{m\times n}_k$ and
repeating the algorithm may not work.
\end{proof}

{\it Remark} A direct application of LLL (without running the Brownian 
motion) would increase the dependence to $O(\frac{m k^2 \log n}{n} +\log m 
\log n )$. In order to maintain the same asymptotic error bound, the number
of rows in the matrix, $m$, could be significantly less. For example, if
$m,k = n^{1/2}$, then the difference would be $O(\log n/\log\log n)$
versus $O(\log\log n/\log\log\log n )$.  

The proof of Theorem \ref{mainthm3} is along similar lines after appropriate 
scaling and is given in the appendix.

\ignore{
To obtain the approximation factor of 
Theorem \ref{mainthm2} we use the previous scaling technique with
$d = \alpha \max\{ \frac{e mk u}{n} , u\log m\}$, where $u$ is the
number of unabsorbed variables in any constraint and $\alpha \geq 1$ is an
appropriate scaling factor. Let $B' = B+1$, then 
\[ \beta = e \alpha^{1/B'} \cdot \max\{ {(\frac{e mk u}{n})}^{1/B'} ,
 {(u\log m)}^{1/B'} \} = e \cdot {(\alpha u)}^{1/B'} 
\cdot \max\{ {(\frac{e mk }{n})}^{1/B'} ,
 {(\log m)}^{1/B'} \} . \]

For weighted objective function with $c_i \geq 1/p$, we can choose
the value of the scaling factor $\alpha$ more carefully so that $p = 
\alpha^{1 - 1/B'} \cdot {(\log n)}^{1 -2/B'} $, and $mk/n \leq \log m$,
using $S = {(\alpha \log^2 n )}^{1/B'}$ gives an objective value
of at least $ nB'/(p Sk)  \geq n/(k \alpha \log n) $ 
which satisfies condition \ref{eqnobj}. Note that $S = {( p \log n)}^{1/(B'-1)} 
$ from the above parameter settings.\\  
By subsituting back $B+1$ for $B'$ in the previous calculations,
$S = {( p \log n)}^{1/B}$. So for $B = 1$, the approximation is
$O(p\log n)$.
This completes the proof of Theorem \ref{mainthm2}. \\[0.1in]
}
\color{black}

\section{Applications of our rounding results}
\label{sec6}

In this section we briefly sketch the applications of our rounding theorems.
Although these do not significantly improve prior results,
our parameterization could simplify further applications.
\ignore{For some of the applications, we need a weighted version of Theorem
\ref{mainthm2} that we state without proof - which follows from our
earlier applications of LLL.
\begin{lemma}
For $A \in {\{0,1 \}}^{m \times n}$ with a maximum of
$\rho$ 1's in each column, we can round
the optimum solution $x^{*}$ of the linear program
$\max_{x} \sum_i c_i x_i \ \ s.t. A x \leq 1 \ \ , 0 \leq
x_i \leq 1 $ and
 $1 = \max_i c_i \geq c_i \geq 1/\alpha \mbox{ for some } \alpha \geq e $
to $\hat{x} \in {\{ 0, 1 \}}^n$ such that
\[
{|| A \hat{x}||}_\infty \leq \left( \frac{\log \rho + \log \alpha + \log\log n}{\log\log
(\rho \alpha \log n)}\right) \ \ \text{ and } \sum_i c_i \hat{x}_i \geq
(1 - \epsilon ) OPT \text{ for } OPT \geq \Omega ( m/(\alpha d ))
\label{bnd_col}
\]
\end{lemma}
}

\subsection{Application to Switching circuits}

Consider an $n$-input butterfly network (with $n\log n$ total nodes) 
with $( s_i , t_i ) \ \ i \leq n\log n$ source destination
pairs where each input/output node has $\log n$ sources and $\log n$ 
destinations. 
For any arbitrary instance of routing, we can do a two phase
routing with a random intermediate destination, say, we choose a random
intermediate destination $r_i$ for the source-destination pair $(s_i , t_i )$. 
We want to route a maximum number of pairs subject to some edge capacity
constraints. 

If $n = 2^L$, then the the expected
congestion $k$ of an edge for a random permutation is $\log n \cdot 
\frac{ 2^{\ell} \times
2^{L - \ell} }{2^L} = \log n$ and moreover it can be 
bounded by $c\log n$ with high probability for some constant $c$. 
So there exists a fractional solution with flow value $\frac{1}{\log n}$
for each of the $n\log n$ paths with objective function value $n$. 

Let $A$ be an $m \times t$ matrix where $m$ is the number of edges
and $t = n\log n$ is the number of paths. The edges are denoted 
by $e_1 , e_2 \ldots e_m$
and the paths are denoted by $\Pi_1 , \Pi_2 \ldots \Pi_t$. Then 
$A_{i,j} =1$ iff 
the flow $\Pi_j$ passes through edge $e_i$. The number of edges in a path
is bounded by $L = \log n$. The value of the flow through 
path $\Pi_j$ (
denoted by $f_j$) is the amount of flow and let $\bar{f}$ be the
vector denoting all the flows. Then $A \cdot f \leq \bar{c}$ where $\bar{c} =
( c_1 , c_2 \ldots c_m )$ is the vector corresponding to the congestion
in edges $( e_1 , e_2 \ldots e_m )$.

Since the congestion is bounded by $c \log n $, $A \in {\cal A}^{m \times 
n\log n}_{c\log n}$ and there exists a fractional solution 
$\bar{f} = ( 1/c\log n , 1/c\log n \ldots 1/c\log n$ with objective value
$\Omega (n)$. Using $\rho = O(\log n)$ in 
Lemma \ref{bnd_col}, we can round it to
a 0-1 solution that yields the following result. 
\begin{theorem}
In an $n$ input BBN butterfly network having $2(n \log n)$ edges, 
we can route the $\Omega (n)$ source-sink pairs
with congestion $O(\log\log n/\log\log\log n )$. From the capacity 
constraint, it follows that no node contains more than $O(\log\log n)$
sources or destination. 
\end{theorem}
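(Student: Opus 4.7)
The plan is to cast the routing problem as an instance covered by Lemma \ref{bnd_col} (the column–restricted case), exploiting that butterfly paths are short. I would first set up the LP exactly as in the paragraphs leading up to the theorem: after picking a random intermediate destination $r_i$ for each of the $n\log n$ source–sink pairs, every pair has a single associated path $\Pi_i$ of length at most $L = \log n$. Let $x_i \in \{0,1\}$ indicate whether pair $i$ is routed and let $A$ be the $m \times t$ matrix with $m = 2n\log n$ edges, $t = n\log n$ paths, and $A_{e,\Pi} = 1$ iff $e \in \Pi$. I would also augment $A$ with node-capacity rows (one per input/output node) so that the same rounding argument handles the source/destination concentration claim.

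Next I would exhibit the fractional feasible solution $x_i = 1/(c\log n)$ for all $i$, whose feasibility for $A x \le \mathbf{1}$ is exactly the standard Valiant-type concentration stated in the paragraph preceding the theorem (expected congestion $\log n$, w.h.p.\ at most $c\log n$). This gives $OPT \ge n\log n / (c\log n) = \Omega(n)$ for the unit-weighted objective $\sum_i x_i$.

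Now I invoke Lemma \ref{bnd_col} with $\rho = O(\log n)$ (since each column, i.e.\ each path, has at most $\log n$ edges, deterministically), $m = O(n\log n)$, and $OPT = \Omega(n)$. The first term in the error bound evaluates to
\[
\frac{\log \rho + \log\log m}{\log\log(\rho\log m)} = O\!\left(\frac{\log\log n}{\log\log\log n}\right),
\]
and the second term evaluates to $O(\log(m/OPT)/\log\log(m/OPT)) = O(\log\log n/\log\log\log n)$ because $m/OPT = O(\log n)$. Both terms match, so the rounded solution $\hat x$ satisfies $\|A\hat x\|_\infty = O(\log\log n/\log\log\log n)$ while $\sum_i \hat x_i \ge (1-\epsilon)OPT = \Omega(n)$. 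Applied to the edge rows this is the claimed congestion bound; applied to the node rows this bounds the number of sources or destinations active at any node.

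The main technical check, and the place where one has to be careful, is verifying the $OPT \ge m/(\alpha d)$ hypothesis of the LLL-based lemma so that the objective-value event can be absorbed into the Lovász Local Lemma alongside the $m$ congestion events. Here $d = O(\rho \log m) = O(\log^2 n)$ after the Brownian sparsification, giving $m/d = O(n/\log n)$, which is comfortably below $OPT = \Omega(n)$ for a modest constant $\alpha$; this is what makes the second term in the error expression not dominate. Everything else is a direct appeal to the sparsification corollary (Corollary \ref{browniansteps}) followed by Moser–Tardos on the sparsified system, exactly as in the proof of Theorem \ref{mainthm0-1}.
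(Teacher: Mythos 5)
Your proof is correct and follows essentially the same route as the paper's: set up the same edge-path $0$--$1$ matrix with the Valiant fractional solution $x_i = 1/(c\log n)$ giving $OPT=\Omega(n)$, and invoke Lemma~\ref{bnd_col} with $\rho = O(\log n)$ (short butterfly paths) to get $\|A\hat x\|_\infty = O(\log\log n/\log\log\log n)$ while retaining $\Omega(OPT)$ pairs. You fill in a few details the paper leaves implicit (explicitly augmenting $A$ with node-capacity rows to justify the source/destination concentration claim, and checking that the second term $\log(m/OPT)/\log\log(m/OPT)$ matches the first since $m/OPT = O(\log n)$), but the argument and the key lemma are the same.
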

Note that we can also handle weighted objective functions. 

The above result matches the previous results of \cite{MS:99,CMHMRSSV:98} 
which have the advantage of being online. Using $B = 
\log\log n/\log\log\log n$, in Lemma \ref{colbndapprox} we can obtain an
optimal bound. Note that, in this case we can match asymptotically 
the optimal fractional flow of $n\log\log n/\log\log\log n)$ with 
$x_i = \frac{\log\log n}{c \log\log\log n\cdot \log n}$.
\begin{theorem}
In an $n$ input multi-butterfly network having $n\log n$ edges, 
we can route $\Omega (\frac{n\log\log n}{\log\log\log n})$
flows with congestion $O(\log\log n/\log\log\log n )$ 
where each node can be the source or sink of at most 
\\
$O(\log\log n/\log\log\log n )$ flows.
\end{theorem}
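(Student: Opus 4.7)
The plan is to formulate the routing as a packing LP on a matrix $A \in \{0,1\}^{m\times t}$ with $m = n\log n$ edges and $t = n\log n$ candidate paths (one per source-destination pair, using the two-phase routing through a random intermediate node as in the previous theorem), then apply the weighted rounding referenced as Lemma \texttt{colbndapprox} (the version of Theorem \ref{mainthm3} with right-hand side $B$) with the choice $B = \log\log n/\log\log\log n$. Each path has length $\log n$, so each column of $A$ has at most $\log n$ ones; this matches the column-sparse regime where our machinery applies. I would include both edge-congestion constraints (with capacity $B$) and, if needed, explicit node-capacity constraints limiting how many flows use any node as source or sink (again with capacity $B$), so that the final rounded solution automatically satisfies the required node load bound.

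Next I would verify the fractional feasibility. Set $x_i = \frac{\log\log n}{c\log\log\log n \cdot \log n}$ uniformly across the $n\log n$ paths. By the usual valiant-style argument (already invoked for the previous theorem), the two-phase routing distributes $\log n$ units of flow at each input node uniformly over levels, so each edge in the BBN/multi-butterfly carries expected unit flow $O(\log n)$ and, with high probability, at most $c\log n$ for some constant $c$. Multiplying by $x_i$, the fractional congestion on each edge is at most $\frac{\log\log n}{\log\log\log n} = B$. Similarly each node, being the endpoint of $\log n$ paths, has fractional source/sink load at most $B$. The objective value is $n\log n \cdot x_i = \Omega\!\bigl(\tfrac{n\log\log n}{\log\log\log n}\bigr)$, matching the target $OPT$.

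Then I would invoke the rounding result with parameter $B$: after the Brownian-walk sparsification of Corollary \ref{browniansteps}, every constraint has at most $\log m$ unfixed variables, and the LLL analysis of Section~\ref{sec5} rounds them while losing at most a factor of $(\max\{p\log m,\log^2 m\})^{1/B}$ in the objective. With $p = 1$ (unweighted) and $B = \log\log n/\log\log\log n$, this loss is $(\log^2 m)^{1/B} = \exp(O(\log\log\log n)) = (\log\log n)^{O(1)}$, which is absorbed into the $\Omega(\cdot)$ without changing the asymptotic count of routed flows because the fractional optimum already has a factor $\log\log n/\log\log\log n$ of slack over the $\Omega(n)$ routed in the previous theorem. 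The edge-congestion bound $B$ and the node source/sink bound $B$ follow directly from the rounded solution satisfying the two families of constraints with right-hand side $B$.

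The main obstacle, I expect, is controlling the dependency degree $d$ in LLL so that $(\log d)/B \ll \log\log\log n$. Because each edge lies on only $O(\log n)$ of the $n\log n$ paths and each path crosses $\log n$ edges, the per-constraint dependency after sparsification should be $d = \text{poly}(\log n)$, giving $\log d = O(\log\log n)$, so $(\log^2 m)^{1/B}$ indeed stays $\text{polylog}\log n$ and does not consume the slack. A secondary subtlety is that the objective-function event $A_{m+1}$ in the asymmetric LLL must be handled with a scaling factor $\alpha$ chosen so that $OPT \geq m/(\alpha d)$; since our $OPT = \Omega(n\log\log n/\log\log\log n)$ while $m/d$ is only $\text{polylog}\log n$, the required $\alpha$ is small enough not to inflate the congestion bound beyond $O(\log\log n/\log\log\log n)$. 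Once these two quantitative checks are in place, the theorem follows from Corollary \ref{browniansteps}, Claim \ref{deg_bnd}, and Theorem \ref{lll} exactly as in the proof of Theorem \ref{mainthm0-1}.
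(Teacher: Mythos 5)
Your reduction and parameter choices mirror the paper's: formulate the two-phase routing as a packing LP with edge (and node) capacity constraints, scale the fractional flow to $x_i = \frac{\log\log n}{c\log\log\log n\cdot\log n}$ so the right-hand side becomes $B = \log\log n/\log\log\log n$, then apply the RHS-$B$ rounding of Lemma \ref{colbndapprox} (equivalently Theorem \ref{mainthm3}). That is exactly the paper's one-line proof for this theorem.

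There is, however, a genuine quantitative gap in the last step, and your proposal glosses over it. The loss factor in the rounding is $S = \max\bigl(d^{1/(B-1)}, (m/OPT)^{1/(B-1)}\bigr)$. With the dependency degree $d = \Theta(\log m\cdot\log n) = \Theta(\log^2 n)$ that you correctly identify (each constraint retains at most $\log m$ unfixed paths, each of length $\log n$), this gives $S = \Theta\bigl((\log\log n)^2\bigr)$ when $B = \log\log n/\log\log\log n$. This is \emph{not} absorbed by the slack $\log\log n/\log\log\log n$ in the fractional optimum: since $(\log\log n)^2 \gg \log\log n/\log\log\log n$, we get
$\Omega(OPT/S) = \Omega\bigl(n/(\log\log n\cdot\log\log\log n)\bigr)$,
which is strictly smaller than the claimed $\Omega\bigl(n\log\log n/\log\log\log n\bigr)$ and indeed smaller than the $n/\log\log n$ pairs that Maggs--Sitaraman already route online at this congestion. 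So the generic Lemma \ref{colbndapprox} alone does not yield the theorem as stated; one would need either a sharper dependency bound than $\Theta(\log^2 n)$ (for instance by exploiting expansion of the multi-butterfly) or a different accounting of the loss. The paper's own proof is the same one-sentence citation and runs into the same arithmetic, so this appears to be a gap in the paper's claim rather than in your reconstruction---but your specific assertion that the loss ``is absorbed into the $\Omega(\cdot)$ without changing the asymptotic count'' is the step that does not check out.
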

The above result marginally extends a similar result of Maggs and Sitaraman
\cite{MS:99} where they can route $(n/\log^{1/B} n)$ pairs in the online
case. It remains an open problem to find a fast online implementation of
our rounding algorithms. 
\subsection{Maximum independent set of rectangles}

Consider a $\sqrt{n} \times \sqrt{n}$ grid and a set of axis-parallel
rectangles that are aligned with the grid points,i.e., the upper-left
and the lower-bottom corners are incident on the grid\footnote{We can
assume that all the grid-points that are flush with the sides are
in the interior by slightly enlarging the rectangle}. Each rectangle
contains $A$ grid points for some $A \leq n$ but they can have any
aspect ratio. Trivially, the different types of such rectangles can be
at most $A$ where $A = a \cdot b$ for $a = 1, 2 \ldots A$. Two such
rectangles $r_1$ and $r_2$ are {\it overlapping} if $r_i \cap r_2$
contains one or more grid points. 

From the $A$ possible rectangles whose upper-right corners are anchored
at a specific grid point $p$, the input consists of one such rectangle.
To avoid messy calculations, we assume that the grid is actually embedded on
a torus. Given a set $S$ of $n$ such rectangles, our goal is to select a
large non-overlapping set of rectangles. Although this is a restricted
version of the general MISR problem it still captures many applications. 

For any given grid point $p$, let $S(p)$ denote the set of rectangles
containing $p$. This can be bounded by $\sum_{a = 1}^{A} a \times A/a = 
O( A^2 )$. After solving the relevant packing linear program, the
$n \times n$ point-rectangle matrix ${\cal A}$ contains 1 in the $(i,j)$
position if the $i$-th grid point is incident on the $j$-th rectangle.
From our previous observation, no row contains more than $A^2$ 1's. 

By setting $k = A^2$, we can apply our rounding results to obtain a wide
range of trade-offs for this problem. For example, if $A$ is polylog($n$), 
then using $\rho = A$ in Lemma \ref{bnd_col}, 
we can choose $\Omega (OPT)$ rectangles with no more than 
$O(\log\log n/ \log\log\log n )$ overlapping rectangles on any clique, 
where $OPT$ is the fractional
optimal solution of the corresponding packing problem. 

Further, using a result of \cite{ENO:04}, we
can obtain $\Omega ( \frac{OPT \log\log\log^2 n}{\log\log^2 n })$ 
non-overlapping rectangles as well as a $\frac{\log\log^2 n}
{\log\log\log^2 n}$ approximate solution in the weighted case.
\subsection{$b$-matching in random hypergraphs}

In a hypergraph $H$ on $n$ vertices has hyper-edges defined by subsets of 
vertices.
We wish to choose the maximum set of hyper-edges such that no more than
$b$ hyperedges are incident on any vertex. An integer linear program
can be written easily for this problem with $n$ constraints and $m$
variables corresponding to each of the $m$ edges. Let $A$ be an $n \times m$
matrix where $A_{i,j} = 1$ if vertex $i$ is incident on the $j$-th hyperedge.
Let $x_i = \{ 0,1 \}$ depending on if the $i$-th hyperedge is selected in the
$b$-matching.
\[ \textbf{Maximize} \sum_i x_i \ \ s.t. \ A x \leq b \cdot e^m \ \ x_i \in
\{ 0, 1 \} \] 
The relaxed LP can be solved and the solution may be rounded. The weighted
version can be formulated similarly by associating weights $w_i$ for $x_i$.

If the $m$ edges of $H$ are random 
subsets of vertices, then we can represent the fractional 
solution as an $n \times
m$ random matrix with each entry of the matrix set to 1 with probability
$k/n$. That is, each hyperedge has $k$ vertices and chooses $k$ vertices 
randomly. So the expected number of edges incident on a vertex is 
$\frac{mk}{n}$. There exists a feasible fractional solution using
${x_i} \ \ 1 \leq i \leq m = \frac{n}{mk}$ with objective value $OPT \geq n/k$.

For $b = \Omega ( \log\log n/\log\log\log n )$, we can can obtain a 
$b$ matching of
size $\Omega (OPT)$ for {\it most} hypergraphs which matches the best
possible size given by the fractional optimum. In general, we obtain 
an approximation $k^{1/b}$ for $b \geq 2$ using the result of Theorem
\ref{mainthm3} including the weighted version. 

It is known that $k$-uniform $b$-matching problem cannot be approximated
better than $\frac{k}{b\log k}$ for $b \leq k/\log k$ unless 
$P = NP$ \cite{OFS:11,HSS:06}. So our
result shows that the bound can be much better for many $k$-uniform
hypergraphs, for example $k = \log n$ and $b \geq 2$. 
 A closely related result for bounded column case was observed by
Srinivasan \cite{AS:96}.

\begin{footnotesize}
{\bf Acknowledgement} The author would like thank Antar Bandhopadhyay for
pointing out the generality of the Optional Stopping theorem and Deeparnab
Chakrabarty for pointing out the result in 
\cite{CGKT:07}.
A lower-bound construction was given by Nisheeth Vishnoi and Mohit Singh
that refuted an earlier stronger claim by the author. The authors also 
acknowledges Nikhil Bansal for some useful observations with regards to
formalizing the main theorem.
\bibliographystyle{alpha}
\bibliography{round}
\end{footnotesize}
\appendix
\section{Appendix}
{\bf Proof of Lemma \ref{boost-prob-absorb}}
\begin{proof}
The proof of the gambler's ruin problem actually uses a {\it quadratic
martingale}\footnote{The proof that it is a martingale can be found in
standard books on stochastic process}  $B^2 (t) - t$ where $B(t)$ is the 
Brownian motion random variable.
Using the optional stopping theorem on this, we obtain
\[ \E [ B^2 (T) - T ] = \E[ B^2 (0) - 0 ] = a^2 \]
Since $p = \frac{b}{a+b}$ is the probability that it is at 0, we obtain
$\E[T] = a \cdot b$.

Now, consider the events leading to the failure of being absorbed at 0 -
these correspond to the absorption of the random walk at either ends or
the non-absorption at either ends. 
\[ \Pr [ Failure ] = \Pr [ Failure \cap Absorption ] + \Pr [ Failure \cap
Non-absorption ]\]
\[  = \Pr [ Failure | Absorption ] \cdot \Pr [ Absorption] 
+ \Pr[ Failure | Non-absorption ] \cdot \Pr [Non-absorption]  \]
\[
\leq \Pr [ Failure | Absorption ] + \Pr [Non-absorption] \] 
From the preceding argument ,the probability that the random walk is not 
absorbed after $k a\cdot b$ steps 
is less than $1/k$ using Markov's inequality. 
From optional stopping criteria, the probability of
absorption at 0 is $\frac{b}{a+b}$. The probability that the random walk is
not absorbed at 0 after $k a  b $ steps is bounded by 
$ 1 - \frac{b}{a+b} 
+ 1/k$. Consequently, the probability of being absorbed at 0 is 
at least $\frac{b}{a+b} - 1/k$. 
\end{proof}
\subsection{A Lower Bound on error}
\begin{lemma}
For $A \in {\cal A}^{m\times n}_{\log n}$ such that $A\cdot \bar{x} \leq e^m$,
we cannot simultaneously obtain $\sum_i \hat{x}_i = \Omega ( n/\log n )$ and
${|| A \hat{x}||}_\infty  \leq o(\frac{\log\log n}{\log\log\log n})$
for $\hat{x}_i \in \{ 0, 1 \}$.
\label{lbnd}
\end{lemma}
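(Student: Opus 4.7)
The plan is to establish the lower bound by a probabilistic existence argument: I will show that for $A$ drawn uniformly from $\mathcal{A}^{m\times n}_{\log n}$ with $m=n$, with probability $1-o(1)$ every subset $T\subseteq[n]$ of size $|T|\ge cn/\log n$ (for any fixed constant $c>0$) contains a row whose intersection with $T$ is at least $t=(1-\varepsilon)\log\log n/\log\log\log n$. Identifying $\hat{x}$ with its support $T=\{i:\hat{x}_i=1\}$, this exactly says that ${\|A\hat{x}\|}_\infty\ge t$ whenever $\sum_i\hat{x}_i\ge cn/\log n$, which is the desired contradiction with $o(\log\log n/\log\log\log n)$.

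First I would fix $T$ with $|T|=cn/\log n$ and analyze $Z_j:=|\mathrm{row}_j\cap T|$, which is hypergeometric with mean $\mu=k|T|/n=c$. Using the standard approximation
\[
\Pr[Z_j\ge t]\;\ge\;\binom{k}{t}\bigl(|T|/n\bigr)^t\bigl(1-|T|/n\bigr)^{k-t}\;=\;(1+o(1))\,\frac{c^{\,t}}{t!},
\]
valid while $t\ll k$, I would set $p_t:=\Pr[Z_j\ge t]=\Theta(c^t/t!)$. Since rows are independent, the probability that $T$ is ``missed by all rows'' is
\[
\Pr\bigl[\max_j Z_j<t\bigr]\;\le\;(1-p_t)^m\;\le\;\exp(-m p_t).
\]
A union bound over subsets of a given size then gives
\[
\Pr[\exists T,\ |T|\ge cn/\log n,\ \max_j Z_j<t]\;\le\;\binom{n}{\lceil cn/\log n\rceil}\,e^{-m p_t}.
\]

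The next step is to verify that the right-hand side is $o(1)$ for $t=(1-\varepsilon)\log\log n/\log\log\log n$. A direct computation gives $\log\binom{n}{cn/\log n}=\Theta(n\log\log n/\log n)$, so I need $m p_t=\omega(n\log\log n/\log n)$, i.e.\ (with $m=n$) $t!\le c^t\,\log n/\log\log n$ up to constants. Using Stirling, $\log(t!)\le t\log t=(1-\varepsilon)\log\log n\cdot (\log\log\log n-\log\log\log\log n+O(1))/\log\log\log n\le(1-\varepsilon/2)\log\log n$, hence $t!\le (\log n)^{1-\varepsilon/2}$, which comfortably beats $\log n/\log\log n$. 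Therefore the union bound succeeds and with probability $1-o(1)$ the random matrix $A$ has the desired property, establishing existence of a hard instance.

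The main obstacle is the interplay between the two exponential-sized quantities: the number $\binom{n}{cn/\log n}=2^{\Theta(n\log\log n/\log n)}$ of candidate supports and the single-$T$ success probability $mp_t$. Everything hinges on the claim that $p_t\approx c^t/t!$ is the \emph{right} tail of the hypergeometric, not the weaker Chernoff-style bound; I will have to justify this approximation carefully, in particular ensuring that the $1+o(1)$ factor and the condition $t\ll k=\log n$ hold throughout the chosen range $t=O(\log\log n/\log\log\log n)$. Once this tail estimate is in place, matching exponents is routine. Finally, I would remark that the same calculation extends to any polynomial $m$ by replacing $n$ with $m$ in the exponent of the union bound, showing the hardness is robust within the family.
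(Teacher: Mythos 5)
Your proposal follows the same high-level strategy as the paper's proof (probabilistic existence argument: fix a candidate support $T$ of size $\Theta(n/\log n)$, estimate the probability that every row of a random $A$ intersects $T$ in fewer than $t$ places, then union-bound over $T$), and your execution is correct. In fact, your estimates are noticeably tighter than the paper's and this matters. The paper lower-bounds the hypergeometric tail via $\binom{n/k}{t}\binom{n-n/k}{k-t}/\binom{n}{k} \geq \frac{1}{e^{k-t/2}t^t}$ using $\binom{a}{b}\geq(a/b)^b$ in the numerator and $\binom{a}{b}\leq(ae/b)^b$ in the denominator, thereby dropping a factor of roughly $e^k = n$ (for $k=\log n$) from what a direct ratio-of-products estimate gives (which is your $(1+o(1))\,\binom{k}{t}(|T|/n)^t(1-|T|/n)^{k-t}\approx c^t e^{-c}/t!$). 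The paper also bounds the number of candidate supports by $e^n$ rather than your correct $\exp\bigl(\Theta(n\log\log n/\log n)\bigr)$. Both slack factors point in the unfavorable direction, and the paper's resulting condition (its equation (\ref{lbndreq})), $\log(m/n) > k + t\log t$, cannot be met when $k=\log n$ and $m=n\cdot\mathrm{polylog}(n)$ (it would require $m \gtrsim n^2 t^t$), so the paper's derivation as written does not actually establish the lemma for the parameter range it quotes. Your tighter computation reduces the requirement to $t! \lesssim c^t\log n/\log\log n$, which you correctly verify holds for $t=(1-\varepsilon)\log\log n/\log\log\log n$, and it works already at $m=n$. One small simplification you can make: the worry you raise about justifying the hypergeometric tail via a binomial comparison theorem is not actually needed; just bound $\Pr[Z_j\ge t]\ge\Pr[Z_j=t]$ and estimate the latter by directly expanding $\binom{|T|}{t}\binom{n-|T|}{k-t}/\binom{n}{k}$ as a product of $k$ ratios, each within a $(1\pm O((\log n)^2/n))$ factor of $(|T|/n)$ or $(1-|T|/n)$, which yields the binomial form with an overall $(1-o(1))$ factor since $k=\log n$ is small. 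Also note that unioning only over supports of size exactly $\lceil cn/\log n\rceil$ is enough, since if every row intersects some larger $T$ in $<t$ places, the same holds for any subset of $T$ of the target size.
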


\begin{proof}

As mentioned before, it is easy to see that $x_i = 1/k$ is a solution 
for $\bar{x}$ with objective
function value $n/k$. After rounding $\bar{x}$ we want to have at least 
$\Omega ( n/k )$ 1s in the rounded vector, say $\bar{y}$.
We must have $A \cdot \bar{y} \leq t\cdot e$ for some rounding guarantee
$t$. Let $b$ be a fixed vector with $n/k$ 1s. 

Let $r$ be a row vector with $k$ 1s in random location - this corresponds to
a row of $A$. The probability that $< r , \bar{y} > \ \ \geq t$ is given by
 \[  \frac{{ (n/k) \choose t} 
\cdot { (n-n/k) \choose (k-t) }}{{n \choose k }} \]
Using $ {(\frac{n}{k})}^k \leq {n \choose k} \leq {(\frac{ne}{k})}^k $, the 
above expression is
\begin{eqnarray}
 \geq & \frac{ {( \frac{n}{tk})}^t \cdot {(\frac{n -n/k}{k-t})}^{k-t}}{
{(\frac{ne}{k})}^k} \\
 = & \frac{ \frac{1}{t^t} \cdot {( \frac{k-1}{k-t})}^{k-t}} {
 e^k } \\ 
 = & \frac{ k^{k-t}}{ {( k-t )}^{k-t} \cdot e^k \cdot t^t }\\
 = & \frac{1}{ {( 1 - t/k )}^{k-t} \cdot e^k \cdot t^t } \\
 \geq & \frac{1}{{( 1 - t/k )}^{k/2} \cdot e^k \cdot t^t } \text{ for } k \gg t\\
 = & \frac{1}{ e^{ k - t/2} \cdot t^t }
\end{eqnarray}
So the probability that for $m$ independently chosen rows, the probability that
$< r , \bar{y} > \ \ < t$ is 
\begin{eqnarray}
  p(m,k,t) &  \leq & {\left( 1 - \frac{1}{ e^{ k - t/2} \cdot t^t } \right)}^m 
\leq  {\left( 1 - \frac{1}{ e^{ k } \cdot t^t } \right)}^m =
{\left( 1 - \frac{1}{ e^{ k } \cdot t^t } \right)}^{e^k t^t \frac{m}{e^k t^t}}
\\
 & \leq & {\left( \frac{1}{e} \right)}^{\frac{m}{e^k t^t}} 
\end{eqnarray} 
Since there are no more than ${n \choose
(n/\log n)}$ choice of columns with $n/\log n$ 1s, the probability that all
the dot products are less than $t$ is less than ${n \choose (n/\log n)} \cdot
p(m,k,t)  \leq p(m,k,t) \cdot {(\frac {n e}{n/\log n})}^{n/\log n} 
 \leq e^n \cdot p(m,k,t)$. If $e^n \cdot p(m,k,t) < 1$ then there must exist
matrices in ${\cal A}^{m \times n}_k$ that do not satisfy the error bound $t$.
So,
\[    1  > {\left( \frac{1}{e} \right)}^{\frac{m}{e^k t^t}} \cdot e^{n} 
    \Rightarrow {( e ) }^{\frac{m}{e^k t^t}}  >  e^n \] 
 which is equivalent to the condition 
\begin{equation}
 \frac{m}{n}  > e^k t^t \text{  or  }
\log(m/n) > k + t\log t 
\label{lbndreq}
\end{equation}
 
 For $k = \log n$ and $m = n\cdot polylog(n) $, this holds for some $t \geq 
\frac{\alpha \log \log n}{\log\log\log n})$, i.e., for some constant 
$\alpha > 0$, i.e., 
the error cannot be
$o (\frac{\log\log n}{\log\log\log n})$. 
\end{proof}

\ignore{
\subsection{An alternate proof for the RT bound}
Our algorithm also provides an alternate proof for the 
$O(\frac{\log m}{\log\log m})$ error bound in constraints incurred in 
case of classical randomized rounding by Raghavan and 
Thompson \cite{RT:87}
We run our algorithm on the above problem(with scaling $S=1$).
As shown in the proof of Lemma \ref{absorp_rate},the expected number of 
unfixed variables in $i^{th}$ constraint $=E(u_i^p) (=\|V_i^p\|_2 ^2) \leq \frac{n}{2^p}$ (when $B = S = 1$).The above holds with high probability when $p \leq \log(n)-\log(\log(m))(=p*)$.\\
To simplify our analysis we define variables indexed by the number of phases after p* phases are over.
Let q indicate the number of phases after $p*$ phases are over.
Let $w_i^q=u_i^{q+\log (n)-\log (\log (m))}$ (that is the number of unfixed variables in $i^{th}$ constraint indexed by number of phases after p*)and let $\overset{\sim}{T_q}=T_{q+\log (n)-\log (\log (m))}$ (that is the number of iterations as a function of number of phases after p*).\\
Now $E(w_i^q)=\frac{n}{2^{\log (n)-\log (\log (m))+q}}=\frac{\log m}{2^q}$.
To get bound with sufficiently high probability,bound we need to run our algorithm for a sufficiently large number($>>\overset{\sim}{T_q}$) of steps so that,number of unabsorbed variables is bounded by $\frac{\log (m)}{2^q}$ with high probability.

We claim the following:-
\begin{lemma}
If r=$2^q \cdot 2^{2^q +1}$, then after $\overset{\sim}{T_r}=\frac{2^{r+\log(n)-\log(\log(m))} \cdot 2^{r+\log(n)-\log(\log(m))}}{n^2 \gamma^2}$ steps,the number of unfixed variables in $i^{th}$ equation is bounded by $\frac{\log(m)}{2^q}$ with probability $\geq 1-\frac{1}{m}$.
\end{lemma}
\begin{proof}
By definition $w_i^r$ is the number of unfixed variables in the $i^{th} constraint$ after $\overset{\sim}{T_r}$ iterations.\\
We run it for $\overset{\sim}{T_r}=\frac{2^{\log (n)-\log (\log (m))+r} 
2^{\log (n)-\log (\log (m))+r}}{n^2 \gamma^2}$ steps, where r is some function of q, so that $E(w_i^r)=\frac{\log (m)}{2^r}$\\
Thus,$Pr(w_i^r> \frac{\log (m)}{2^q})$\\
$=Pr(w_i^r>E(w_i^r)(1+\delta ))$, where $(1+\delta)\frac{\log (m)}{2^r}=\frac{
\log (m)}{2^q}$ \\
$\leq (\frac{e^\delta}{(1+\delta)^{1+\delta}})^{\frac{\log (m)}{2^r}}$\\
$=e^{\delta \frac{\log (m)}{2^r}-(1+\delta )\ln(1+\delta )\frac{\log (m)}{2^r}}$,  where $(1+\delta )\frac{\log (m)}{2^r}=\frac{\log (m)}{2^q}$\\
$=e^{\frac{\log (m)}{2^q}-\frac{\log (m)}{2^r}-\frac{\ln(1+\delta)\log (m)}{2^q}}  \text{  since } (1+\delta )\frac{\log (m)}{2^r}=\frac{\log (m)}{2^q} \text{   and  hence  }  \delta \frac{\log (m)}{2^r}=\frac{\log (m)}{2^q}-\frac{\log 
(m)}{2^r}$\\
$\leq e^{\frac{\log (m)}{2^q}(1-ln(1+\delta))}$\\
To bound the probability inverse polynomial in m,
we need to choose$\frac{\log (m)}{2^q}(ln(1+\delta)-1) \geq O(\log (m))$\\
or $\ln(1+\delta) \geq 2^q+1$\\
or $1+\delta  \geq 2^{2^q+1}$\\
or $ 2^r \geq 2^q 2^{2^q+1}$\\
Thus it suffices to run for $\overset{~}{T_r}=\frac{2^{r+\log (n)-\log 
(log (m))}.2^{r+\log (n)-\log (\log (m))}}{n^2 \gamma^2}$ iterations so as to have $w_i^r$ close to O($\frac{\log (m)}{2^q}$), with high probability, where r=$2^{2^q}.2^q$\\
\end{proof}

To prove a bound on the error, 
we proceed similar to the analysis in Section \ref{}. 
Consider a fixed constraint $C_j:V_j.\bar{x} \leq 1$.
\begin{lemma}For all constraints the error in $q^{th}$ phase after p* is 
$\overset{\sim}{\delta_q} \leq \delta_{q+\log(n)-\log(\log(m))}=2^{\frac{q}{2}} \cdot 2^{2^q+1}$
\begin{align*}
\text{Consider } &Pr(\vert < \gamma (\sum_{i=\overset{\sim}{T_{r-1}}+1}^{\overset{\sim}{T_{r}}} U_i),V_j>\vert \geq \beta_q)\\ 
&=Pr(\vert < \sum_{i=\overset{\sim}{T_{r-1}}+1}^{\overset{\sim}{T_{r}}} U_i, \frac{V_j}{\| V_j \|}> \vert \geq \frac{\beta_q}{\gamma \| V_j \|}).
\end{align*}
\end{lemma}
\begin{proof}
Now $<U_i,\frac{V_j}{\vert \vert V_j \vert \vert}> \sim \mathcal{N}(0, \sigma^2)$ where $\sigma^2 \leq 1$.\\
Thus the above is bounded by $e^{-\frac{\beta_q^2}{\gamma^2 \vert \vert V_j \vert \vert ^2 2 (\overset{\sim}{T_r}-\overset{\sim}{T_{r-1}})}}$
Thus for the $q+\log (n)-\log(\log (m))$ phase,we will need to choose $\beta_q$ satisfying:-
$\frac{\beta_q^2}{\gamma^2 \vert \vert V_j \vert \vert ^2 2 (\overset{\sim}{T_j}-\overset{\sim}{T_{j-1}})} \geq \log (m)$\\
i.e. if $\beta_q \geq \gamma \vert \vert V_j \vert \vert \sqrt{2.\log (m).(\overset{\sim}{T_j}-\overset{\sim}{T_{j-1}})}$\\
Since $\vert \vert V_j \vert \vert \leq \sqrt{\frac{\log n}{2^i}}$ and $\overset{\sim}{T_j}=\frac{2^{q+\log (n)-\log (\log (m))} 2^{2^q+1}.2^{q+\log (n)-
\log (\log (m)) 2^{2^q+1}}}{n^2 \gamma^2}$,\\ 
it suffices to choose \begin{align*}
 \beta_q &\geq 2^{q/2}.2^{2^q+1}
\end{align*}
For $q \leq \log (\log (\log (m)))-2$, the error is bounded by $2^{\frac{\log
(\log (\log (m)))-2}{2}}.2^{2^{\log (\log (\log (m)))-2}}=\sqrt{\log 
(m).\log (\log (m))}$\\
This leaves the number of unfixed random variables bounded by O($\frac{\log 
(m)}{2^q}=\frac{\log (m)}{\log (\log (m))}$).Increasing $q$ 
beyond this value will not give any advantage since the error will then exceed the error due obtained by setting variables to 1.
\end{proof}
}
\subsection{Proof of the  general approximation bound: Proof of Theorem \ref{mainthm3} }

If we are interested in maintaining feasibility of constraints, we can
ensure it by sacrificing the value of the objective function according to
some trade-offs. The same algorithm works, where we use the known method 
of damping the probabilities of rounding (\cite{RT:87,AS:95}).

Let $X_1 , X_2 \ldots X_u$ denote the unabsorbed random 
variables in a constraint
after the Brownian walk phase and let $ U = \sum_i X_i  = \frac{B}{\beta}$
for some $\beta \geq 1$. Let $\hat{X_i}$ be the $\{0, 1\}$ random variables
where $\Pr [ \hat{X_i} =1] = X_i$ and so $\E [ \sum_i  \hat{X_i} ] = 
\sum_i X_i = \frac{B}{\beta}$. To apply LLL, we need
$\Pr [ \sum_i \hat{X_i} > B ] < \frac{1}{d}$.

Since $U = \sum_i \hat{X_i}$, then from Chernoff bounds, we know that 
\begin{equation}
 \Pr [ U \geq (1+ \Delta)\E[ U ] \leq {\left[ \frac{e^\Delta }{ { (1+ 
\Delta )}^{ 1+ \Delta }} \right]}^{E[U]} 
\label{chernoff}
\end{equation}
Using $\E [ U ] = \frac{B}{\beta}$ and $\beta = 1+ \Delta $ we obtain
$\Pr [ U \geq B ] \leq {\left[ \frac{e^{\beta -1}}{ {(\beta )}^{\beta}}\right]
}^{B/\beta} \leq {\left( \frac{e}{\beta} \right)}^B \leq \frac{1}{\alpha d}$
for some scale parameter $\alpha \geq 1$.
It follows that 
\begin{equation}
 {( \beta /e )}^{B} > d\alpha 
\label{damp_req}
\end{equation}
Therefore $\beta = e {(d \alpha)}^{1/B}$.
Since our matrices are 0-1, the
minimum infeasibility happens at $B+1$, so we can substitute $B+1$ instead
of $B$ in the previous calculations to obtain $\beta = e {(d \alpha)}^{1/B+1}$.  
Using $S \geq 1$ as the initial scaling for Brownian walk,
we obtain the following result using Lemma \ref{err_bnd}

\begin{lemma}
For $B \geq 1$,
we can round the fractional solution to a feasible integral solution
with objective function value $\Omega ( OPT /S )$ for S=max(${(\frac{m}{OPT}})^\frac{1}{B-1},d^\frac{1}{B-1}$).
If $B \in \mathbb{Z}$,B can be replaced by B+1 in above exprssion and proof 
\label{colbndapprox}
\end{lemma}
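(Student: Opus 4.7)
The plan is to combine the Brownian walk sparsification of Section 3 with a damped independent rounding step analyzed by the asymmetric Lovász Local Lemma, where the damping factor is exactly the scaling factor $S$ we are trying to determine. I would start the walk not from $x'$ but from $\tilde x = x'/S$, so by Lemma \ref{err_bnd} and Corollary \ref{browniansteps}, after $T = B^2/(S^2 \log^2 m \cdot \gamma^2)$ steps each constraint contains at most $\log m$ unfixed variables and the cumulative random-walk error per constraint is bounded by $c'B/S$. Hence for every constraint $C_j$, the sum of fractional values of its remaining unfixed variables is at most $(c'+1)B/S = cB/S$.

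Next I would round the unfixed variables independently to $\{0,1\}$. Let $U_j = \sum_{i \in \mathcal{U}(T)} A_{j,i}\hat{x}_i$ denote the resulting value of constraint $j$; we have $\mathbb{E}[U_j] \le cB/S$. Setting $\beta_j = B/\mathbb{E}[U_j] \ge S/c$ and applying the sharper Chernoff bound in Equation \eqref{chernoff}, I obtain
\[
\Pr[U_j > B] \le \Bigl(\tfrac{e}{\beta_j}\Bigr)^{B} \le \Bigl(\tfrac{ec}{S}\Bigr)^{B},
\]
and for integer $B$ the minimum violation is $B+1$, so $B$ may be replaced by $B+1$. This gives the bad events $E_j : U_j > B$ for $j = 1,\ldots,m$, together with an additional bad event $E_{m+1}$ that the objective falls below $(1-\epsilon)OPT/S$, for which the standard Chernoff bound gives $\Pr[E_{m+1}] \le \exp(-\epsilon^2 OPT/(2S))$.

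Applying the asymmetric LLL (Theorem \ref{lll}) with weights $y_j = 1/(\alpha d)$ for $j \le m$ and $y_{m+1} = 1/2$, exactly as in the proof of Theorem \ref{mainthm0-1}, yields two simultaneous requirements: first, $(ec/S)^{B} \lesssim 1/(\alpha d)$, which forces $S \ge \Omega((\alpha d)^{1/B})$; second, $\exp(-\epsilon^2 OPT/(2S)) \lesssim \exp(-m/(\alpha d))$, which forces $\alpha d \ge \Omega(mS/OPT)$. Eliminating $\alpha d$ by substituting the second into the first gives $S \ge \Omega((mS/OPT)^{1/B})$, i.e. $S^{(B-1)/B} \ge \Omega((m/OPT)^{1/B})$, so $S = \Omega((m/OPT)^{1/(B-1)})$. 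In the regime where the objective constraint is slack (so $\alpha$ can be taken as a constant and $d$ alone dominates), the first requirement gives the other term in the max; for integer $B$ the substitution $B \to B+1$ propagates through both constraints and yields the stated integer-case bound.

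The main obstacle will be keeping the two coupled constraints in the LLL consistent: since $\alpha$ is a free parameter but appears on both sides of the derivation through $\alpha d \sim mS/OPT$, one must verify that some choice of $\alpha \ge 1$ simultaneously satisfies the local-lemma inequality for the $m$ constraint events and the inequality for the objective event, and that the resulting $S$ is the smaller of the two candidate bounds. A secondary technical point is justifying that $\mathbb{E}[U_j] \le cB/S$ holds with high probability uniformly over all constraints after the walk — this follows from applying Corollary \ref{normbnd} and the error bound of Lemma \ref{err_bnd} together with a union bound, invoking the condition $m \le n\log m/k$ to push the walk far enough that the number of unfixed variables per row is $O(\log m)$ before the dependency parameter $d$ blows up. Finally, Moser--Tardos provides the polynomial-time realization of the LLL guarantee, and the objective value of the output is $\Omega(OPT/S)$ as required.
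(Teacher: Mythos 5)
Your proof follows essentially the same route as the paper's own argument: scale the starting point by $1/S$, run the Brownian walk so each constraint has $O(\log m)$ unfixed variables and residual $O(B/S)$, round the remaining variables independently, bound each bad event by $(e/\beta)^{B}$ via the sharp Chernoff inequality, and apply the asymmetric LLL with weights $y_j = 1/(\alpha d)$ for the constraints and $y_{m+1} = 1/2$ for the objective, then solve the resulting pair of inequalities $(\alpha d)^{1/B} \lesssim S$ and $\alpha d \gtrsim mS/OPT$ for $S$. The one quibble is that your elimination correctly yields $S \gtrsim d^{1/B}$ from $\alpha \geq 1$, not $d^{1/(B-1)}$ as in the lemma statement — but the paper's own derivation has the same form, so this is a typo in the stated exponent rather than a gap in your argument.
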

\begin{proof}
If we start from $\overset{\sim}{x}=\frac{x'}{S}$, the solution produced by brownian walk satisfies the constraints with RHS  
$\frac{B}{\beta}$(as shown in Lemma $\ref{err_bnd}$).\\
Now we define events for LLL similar to the proof of Lemma \ref{bnd_col}.\\
$\forall 1 \leq i \leq m$ define $E_i\equiv (A_i^T.x >B)$.\\
and define $E_{m+1}\equiv c^T.x <(1-\epsilon).OPT$.\\
As shown above, $\forall 1\leq i \leq m$, $Pr(E_i)\leq (\frac{e}{\beta})^B$.\\
Also by chernoff's bound $Pr(E_{m+1})\leq e^{{-\epsilon^2}{OPT/2}}$.\\
We apply LLL with weights $y_i=\frac{1}{\alpha d}$, for $1\leq i \leq m$\\
and $y_{m+1}=\frac{1}{2}$.\\,
for some appropriately defined $\alpha \geq 1$.\\
Now we need $(\frac{e}{\beta})^B \leq \frac{1}{\alpha.d}(1-\frac{1}{\alpha d})^d.\frac{1}{2}$\\
and $e^{\frac{-\epsilon^2 OPT}{2 \beta}} \leq \frac{1}{2}.(1-\frac{1}{\alpha .d})^m \leq \frac{1}{2}.e^{-\frac{m}{\alpha.d}}$(From Equation (\ref{eqnobj}))\\
The above is satisfied for $\frac{OPT}{\beta}\geq \frac{m}{\alpha.d}$\\
Thus we need to choose $\frac{OPT}{(\alpha.d)^\frac{1}{B}}\geq \frac{m}{\alpha.d}$\\
OR $\beta=(\alpha.d)^{1/B} \geq (\frac{m}{OPT})^{\frac{1}{B-1}}$.\\
Also $\alpha \geq 1$.For $B \in \mathbb{Z}$, it would have sufficed to use $B'=B+1$ as opposed to choosing B which proves the stated lemma.
\end{proof}
To prove Theorem \ref{mainthm3}, we can observe that for $c_i \geq \frac{1}{p}$, we have $OPT \geq \frac{n}{kp}$.
For $\frac{mk}{n}=O(\log(m))$, we get the scaling as $max((\log(m))^\frac{2}{B},(p \log(m))^\frac{1}{B})$, which proves Theorem \ref{mainthm3}\\

\color{black}

\end{document}